\newtheorem{theorem}{Theorem}
\newtheorem{lemma}[theorem]{Lemma}
\newtheorem{definition}{Definition}
\newtheorem{bounding}{Bounding Rule}
\newcommand{\tabincell}[2]{\begin{tabular}{@{}#1@{}}#2\end{tabular}}
\newcommand{\BibTeX}{B\kern-.05em{\sc i\kern-.025em b}\kern-.08em\TeX}
\begin{document}


\begin{frontmatter}


\paperid{123} 


\title{A Faster Branching Algorithm for the Maximum $k$-Defective Clique Problem}


\author[A]
{\fnms{Chunyu}~\snm{Luo}}

\author[A]{\fnms{Yi}~\snm{Zhou}\orcid{0000-0002-9023-4374}\thanks{Corresponding Author. Email: zhou.yi@uestc.edu.cn}}

\author[B]{\fnms{Zhengren}~\snm{Wang}} 

\author[A]{\fnms{Mingyu}~\snm{Xiao}\orcid{0000-0002-1012-2373}}

\address[A]{University of Electronic Science and Technology of China, China}
\address[B]{Peking University, China}
\if 0
\author{
    Paper ID: 749
}
\fi

\begin{abstract}
A $k$-defective clique of an undirected graph $G$ is a subset of its vertices that induces a nearly complete graph with a maximum of $k$ missing edges.
        The maximum $k$-defective clique problem, which asks for the largest $k$-defective clique from the given graph, is important in many applications, such as social and biological network analysis.  
        In the paper, we propose a new branching algorithm that takes advantage of the structural properties of the $k$-defective clique and uses the efficient maximum clique algorithm as a subroutine.
        As a result, the algorithm has a better asymptotic running time than the existing ones.
        We also investigate upper-bounding techniques and propose a new upper bound utilizing the \textit{conflict relationship} between vertex pairs. 
        Because conflict relationship is common in many graph problems, we believe that this technique can be potentially generalized.        
        Finally, experiments show that our algorithm outperforms state-of-the-art solvers on a wide range of open benchmarks.
\end{abstract}

\end{frontmatter}


\section{Introduction}
\if 0
The European Conference of Artificial Intelligence (ECAI) is the leading 
discipline-wide conference on AI in Europe. Its history goes back all 
the way to the Summer Conference on Artificial Intelligence and 
Simulation of Behaviour held in July 1974 in Brighton. Nowadays, ECAI is 
organised annually under the auspices of the European Association for 
Artificial Intelligence (EurAI, see Figure~\ref{fig:eurai}).

\begin{figure}[h]
\centering
\includegraphics[width=2.5cm]{eurai}
\caption{Logo of the European Association for Artificial Intelligence.}
\label{fig:eurai}
\end{figure}

Your paper should be typeset in \LaTeX, using the ECAI class file 
provided (\texttt{ecai.cls}). Please do not modify the class file or 
any of the layout parameters.

For instructions on how to submit your work to ECAI and on matters such 
as page limits or referring to supplementary material, please consult 
the Call for Papers of the next edition of the conference. Keep in mind
that you must use the \texttt{doubleblind} option for submission. 
\fi

Finding cohesive subgraphs is a fundamental task in many real-world network applications, such as community detection in social networks \citep{bedi2016community,azaouzi2019community,du2007community}, protein complex mining in biological networks \citep{yu2006predicting,harley2001uniform}, and statistical analysis in financial networks \citep{boginski2006mining,boginski2005statistical}. A recent study in ~\cite{matsugu2023uncovering} also reveals a potential application of finding cohesive graphs in the design of Graph Neural Network ~\cite{matsugu2023uncovering}.
The clique, which requires an edge for every pair of vertices, is a basic model for representing cohesive subgraphs. Additionally, the study of algorithms for the maximum clique problem, that is, finding the maximum clique from a given graph, has made significant progress both theoretically and practically in recent years. 
For example, the best-known time complexity for the exact computation of the maximum clique is $O^*(1.20^n)$ \citep{xiao2017exact}; Practical algorithms such as MC-BRB \citep{chang2019efficient} and MoMC \citep{li2017minimization} can solve massive sparse or hard dense graphs in seconds.
However, the clique model itself is not a preferred choice for real-world applications.
Due to the existence of noises and faults in many scenarios, the constraint that requires all possible relations to exist in the community may be too restrictive \citep{conte2018d2k}. 
Therefore, various clique relaxations have been formulated in the literature, such as $k$-plex \citep{wang2022listing,wang2023gap}, $k$-clique \citep{behar2018finding} and $k$-defective clique \citep{yu2006predicting,gao2022exact,chang2023efficient,jin2024kd}. 

This paper focuses on the $k$-defective clique model, a subset of vertices that allows the induced subgraph missing at most $k$ edges to be a complete subgraph. 
When $k$ is $0$, the $k$-defective clique is a clique.
For any $k\ge 0$, the $k$-defective clique problem is NP-hard \citep{yannakakis1978node} and W[1]-hard with respect to its solution size \citep{khot2002parameterized}.
Furthermore, the problem cannot be approximated with a factor better than $O(n^\epsilon)$ for any $\epsilon>0$ currently \citep{haastad1999clique}.

Although the maximum $k$-defective problem is difficult in theory, there have been several practical algorithms in the last decade, such as \citep{trukhanov2013algorithms,gschwind2018maximum,chen2021computing,gao2022exact,chang2023efficient,jin2024kd}.
All of these algorithms follow the branching algorithm scheme (also known as branch-and-bound, tree search, or backtracking search). Furthermore, \citep{dai2023maximal} studied the enumeration of maximal $k$-defective cliques, which is also based on the branching algorithm. 

From a complexity perspective, under the assumption that $k$ is constant, branching algorithms have worst-case time complexity in the form of $O^*(\gamma_k^n)$ \footnote{The $O^*$ notation omits the polynomial factors.} where $n$ is the vertex number of the input graph and $\gamma_k$ is a constant related to $k$ but strictly smaller than 2 \citep{chen2021computing, chang2023efficient}.
Currently, the cutting-edge $\gamma_k$ is obtained by \citep{chang2023efficient}, for example $\gamma_1=1.84$ and $\gamma_2=1.93$.  

From a practical perspective, the efficiency of branching algorithms is closely related to the tightness of upper bounds to prune unpromising branches. (That is why branching algorithms are also named branch-and-bound in the literature.)
In particular, \citep{chen2021computing} and \citep{gao2022exact} proposed coloring and packing bounding rules, respectively, followed by \citep{chang2023efficient} and \citep{jin2024kd} that improve these bounds.
It should be mentioned that \citep{wunsche2021mind} studied the crown decomposition rule to reduce search space, but only from a theoretical point of view. 

    Despite a considerable number of existing algorithms, we noticed that some important properties of this problem are overlooked. 
    Firstly, existing known branching algorithms are built from scratch, rather than making use of the well-established algorithmic techniques like the maximum clique algorithm. In fact, there is a close relationship between a $k$-defective clique and a clique, that is, a $k$-defective clique is always a superset of a clique. It is possible to introduce the maximum clique algorithm for improving search efficiency. 
    Second, let us say that two vertices are conflict if they cannot be in the same maximum $k$-defective clique. 
    Given two conflict vertices $u$ and $v$, one can reason with this information to derive an tight upper bound of the solution. 
    As a result, there is a chance of further reducing the search space of the algorithm.
    However, to our knowledge, the pairwise \textit{conflict relations} of the vertices are not exploited. 

    Based on the above observations, we make the following contributions in the paper.
\begin{enumerate}
    \item \textbf{We design a new branching algorithm with a better worst-case runtime bound.}  
    Our algorithm is based on the structural property that a $k$-defective clique can be partitioned into a so-called \textit{$k$-defective set} and a clique.
    In a $k$-defective set, every vertex is not adjacent to at least one other vertex.
    The algorithm enumerates these $k$-defective sets and, for each such set $D$, calls the well-studied maximum clique algorithm to find the clique $C$ such that $D\cup C$ is a $k$-defective clique.
    When $k$ is a constant, the complexity of the algorithm is $O^*({\gamma_c}^n)$, where $\gamma_c$ is a maximum clique constant independent from $k$, e.g. ${\gamma_c}=1.20$ by \citep{xiao2017exact}. 
    We notice that this simple approach leads to a better complexity than all known algorithms when $k$ is a constant.
    We further introduce a graph decomposition technique to reduce the runtime bound to $O^*({\gamma_c}^{d(G)})$, where $d(G)$ is a practically small parameter called graph degeneracy.
  
    \item \textbf{We explore tighter upper bounds.} 
    We introduce a novel method to effectively compute an upper bound for the problem.
    Specifically, we propose an upper bounding rule which incorporates the property of pairwise conflict vertices. (Two vertices are conflict if they cannot be included in a sufficiently large $k$-defective clique.)
    We also design a dynamic programming-based algorithm to approximate the upper bound following this rule.
    By both theoretical and practical analysis, we show that the bound obtained by our algorithm beats other state-of-the-art bounding techniques. 
    To our knowledge, it is also the first time that a theoretical dominance relationship among different bounds has been established.
\end{enumerate}
 
Extensive experiments demonstrate that our algorithm outperforms the state-of-the-art algorithms across three sets of large real-world graphs. 
We also analyze different bounds empirically, showing that the new bound can prune much more branches than the existing ones. 
\if 0
\section{Typeset section headers in sentence case}
You presumably are already familiar with the use of \LaTeX. But let 
us still have a quick look at how to typeset a simple equation: 
\begin{eqnarray}\label{eq:vcg}
p_i(\boldsymbol{\hat{v}}) & = &
\sum_{j \neq i} \hat{v}_j(f(\boldsymbol{\hat{v}}_{-i})) - 
\sum_{j \neq i} \hat{v}_j(f(\boldsymbol{\hat{v}})) 
\end{eqnarray}
Use the usual combination of \verb|\label{}| and \verb|\ref{}| to 
refer to numbered equations, such as Equation~(\ref{eq:vcg}). 
Next, a theorem: 

\begin{theorem}[Fermat, 1637]\label{thm:fermat}
No triple $(a,b,c)$ of natural numbers satisfies the equation 
$a^n + b^n = c^n$ for any natural number $n > 2$.
\end{theorem}

\begin{proof}
A full proof can be found in the supplementary material.
\end{proof}

Table captions should be centred \emph{above} the table, while figure 
captions should be centred \emph{below} the figure.\footnote{Footnotes
should be placed \emph{after} punctuation marks (such as full stops).}
 
\begin{table}[h]
\caption{Locations of selected conference editions.}
\centering
\begin{tabular}{ll@{\hspace{8mm}}ll} 
\toprule
AISB-1980 & Amsterdam & ECAI-1990 & Stockholm \\
ECAI-2000 & Berlin & ECAI-2010 & Lisbon \\
ECAI-2020 & \multicolumn{3}{l}{Santiago de Compostela (online)} \\
\bottomrule
\end{tabular}
\end{table}

\fi

\section{Preliminary}

In an undirected graph $G=(V,E)$ where $V$ is the vertex set and $E$ is the edge set, $N_G(u)$ denotes the set of vertices adjacent to $u\in V$ in $G$ (when the context is clear, we use $N(u)$ instead),  $G[S]$ denotes the subgraph induced by $S\subseteq V$, $CN(S)$ denotes the set of vertices in $G$ that are adjacent to every vertex in $S$, that is, $\bigcap_{u\in S}N(u)$.
The complement graph of $G$ is denoted as $\overline{G}$. 
Any two distinct vertices in $\overline{G}$ are adjacent if and only if they are not adjacent in $G$.
Given a positive integer $n$, we use $[n]$ to denote the index set $\{1,2,...,n\}$.

A \textit{clique} in $G=(V,E)$ is a set $S \subseteq V$ where all vertices in $S$ are pairwise adjacent in $G$. 
An \textit{independent set} is a set $S \subseteq V$ where no two vertices in $S$ are adjacent in $G$.
Given a non-negative integer $k$, a set $S\subseteq V$ is a $k$-defective clique if the subgraph $G[S]$ contains at least $\binom{|S|}{2}-k$ edges. Equivalently, $S$ is a $k$-defective clique in graph $G$ if the complement graph $\overline{G[S]}$ has at most $k$ edges.
If $S$ is a $k$-defective clique in $G$, then any subset of $S$ is also a $k$-defective clique in $G$ \citep{gschwind2018maximum}.
For any subset $S \subseteq V$, we define $r(S) = k - |E(\overline{G[S]})|$ as the gap between $k$ and the number of edges in $\overline{G[S]}$. If $r(S) \geq 0$, then $S$ is a $k$-defective clique.

\begin{lemma}[Two-hop Property \citep{chen2021computing}]
\label{lemma_twohop_prop}
    If $S$ is a $k$-defective clique in $G$ with $|S|\ge k+2$, then $G[S]$ is a connected graph, and the length of the shortest path between any two vertices in $S$ is within 2.
\end{lemma}
A $k$-defective clique with fewer than $k+2$ vertices may be disconnected. For example, an isolated vertex and a $k$-clique form a $k$-defective clique of size $k+1$. 
Because we are only interested in the connected and large $k$-defective cliques, we investigate the \textit{non-trivial maximum $k$-defective clique problem} in the remaingings.
\begin{definition}[Non-trivial Maximum $k$-Defective Clique Problem]
    Given a graph $G$ and a non-negative integer $k$, find the largest $k$-defective clique $Q$ from $G$ such that its size is at least $k+2$. If there is no such $k$-defective clique, return 'no'.
\end{definition}
In practice, almost all large real-world graphs have non-trivial maximum $k$-defective cliques, e.g. 113 out of 114 Facebook graphs \footnote{\url{https://networkrepository.com/socfb.php}} in the renowned network repository \citep{nr-aaai15} have non-trivial $k$-defective clique when $k\le 5$.

\section{An Exact Decompose-and-Branch Algorithm}

This section introduces a branching algorithm with a better exponential-time bound than existing ones.
We first introduce the notion of the $k$-defective set.
\begin{definition}[$k$-Defective Set]
    A $D\subseteq V$ is a called a $k$-defective set of $G$ if $\forall u\in D$, $|N(v)\cap D|<|D|-1$, and there are at least $\binom{|D|}{2}-k$ edges in $G[D]$.
\end{definition}
In general, a $k$-defective set $D$ is also a $k$-defective clique, except that every vertex in $D$ is nonadjacent to at least one other vertex in $D$. It is easy to see that the size of any $k$-defective set is no more than $2k$.

For any $k$-defective clique $Q$, we denote $D(Q)$ as the \textit{maximal $k$-defective set} in $Q$, that is, we cannot find another vertex $v\in Q$ such that $D(Q)\cup \{v\}$ is still a $k$-defective set.
Our algorithm is based on the observation that $Q$ can be partitioned into the maximal $k$-defective set $D(Q)\subseteq Q$ and a clique $C(Q)\subseteq Q$. 
We also observe that $C(Q)\subseteq CN(D(Q))$, that is, $\forall u\in D(Q)$, $C(Q)\subseteq N(u)$.
If $D(Q)$ is empty, then $Q$ is a clique in $G$. 
Examples of this observation are given in Fig. \ref{fig: kdc-structure}. 

\begin{figure}[htbp]
\centering
\includegraphics[width=0.8\columnwidth]{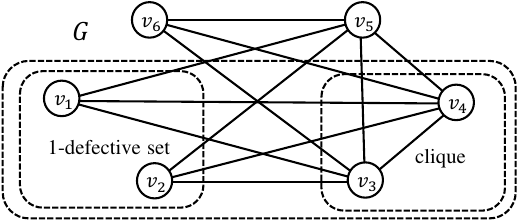}
\caption{
In graph $G$ of 6 vertices, for $S=\{v_1,v_2,v_3,v_4\}$, we have $G[S]$ is a $1$-defective clique with $D(S)=\{v_1,v_2\}$ and $C(S)=\{v_3,v_4\}$.
}
\label{fig: kdc-structure}
\end{figure}


\subsection{The Branching Algorithm}

By the above observation, our approach is outlined below. 
We enumerate all \(k\)-defective sets \(D\) from \(V\). For any non-empty \(k\)-defective set \(D\), we compute the maximum clique \(C^*\) within the subgraph induced by $CN(D)$. The maximum \(k\)-defective clique is the largest one among all candidates \(D \cup C^*\). 

For clarity in the description of the algorithm, we define the \textit{general $k$-defective clique problem}.

\begin{definition}[General $k$-Defective Clique Problem]\label{def:kdsEnum}
Given an instance $I=(G=(V,E),P,R)$ where $G$ is the graph, $P$ and $R$ are two disjoint subsets of $V$, find the maximum $k$-defective clique $Q^*$ such that $P \subseteq D(Q^*) \subseteq P\cup R$ and $Q^*\subseteq V$.
\end{definition}

We use $\omega_k(I)$ to denote the size of the maximum $k$-defective clique of instance $I$.
When the context is clear, we use $n$ to represent the number of vertices in $G$.

In Alg. \ref{alg: branch_framework}, we implement the above idea to solve the general $k$-defective clique problem with instance $I$.
For any input $I$, we find a $k$-defective set $D$ such that $P\subseteq D\subseteq P\cup R$. For a non-empty set $D$, we find the maximum clique $C$ from the set $CN(D)$. 
If $D$ is empty, we find the maximum clique from $C$ from $V$.
Then, the maximum $k$-defective clique is the largest set among all $D\cup C$. 
The following example explains this tree search algorithm.


\begin{figure}[htbp]
\centering
\includegraphics[width=\columnwidth]{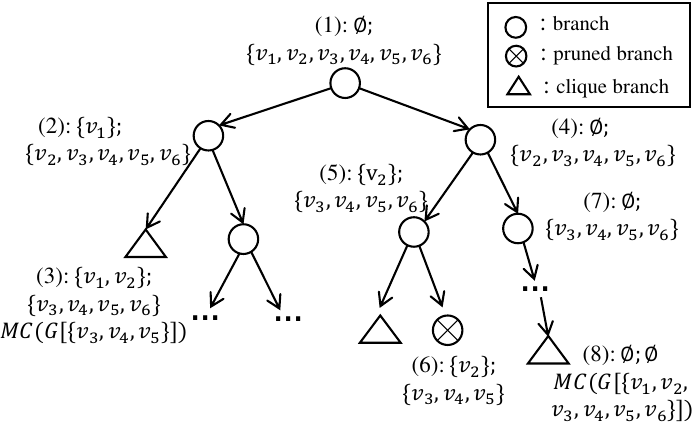}
\caption{The search tree of Alg. \ref{alg: branch_framework}. 
Each node in the tree represents a branch.
 The two sets labeled after the node represents $P$ and $R$, respectively, when the branch is generated.}
\label{fig:branch-emp}
\vspace{-2mm}
\end{figure}

\begin{algorithm}[ht!]
    \DontPrintSemicolon
        \caption{The branching algorithm for the maximum $k$-defective clique problem based on $k$-defective set enumeration.}
        \label{alg: branch_framework}
        \KwIn{An instance $I=(G,P,R)$ and and a known solution $Q^*$}
        \KwOut{A maximum $k$-defective clique $Q^*$ of $I$}   
       \textbf{Procedure}  branching($I=(G=(V,E),P,R)$)\\            
       \Begin{   
            \If{$\exists u\in P$ that $u$ is adjacent to all vertices in $P\cup R$ or $r(P)<0$}{
                \tcc{$P$ is not included in any $k$-defective set. }
                \Return
            }
            \If{$r(P) = 0$ or $R= \emptyset$}{
                \tcc{$P$ is a $k$-defective set and no more branching} 
                Let $C$ be $CN(P)$ if $P$ is non-empty, otherwise $V$\\
                $Q\gets \text{MC}(G[C])\cup P$\\
                \If {$|Q| > |Q^*|$}{
                    $Q^* \gets Q$
                }
                \Return
            }
            \If{$\exists v\in R$ that $u$ is adjacent to all vertices in $P\cup R$}{
                branching($I'=(G, P, R\setminus \{v\})$)\\
            }
            \Else{
                Find a branching vertex $v$ in $R$ such that $v=\arg\max_{u\in R}(|P|-|P\cap N(u)|)$\\
                branching($I'_1=(G,P\cup \{v\}, R\setminus \{v\})$)\\
                branching($I'_2=(G,P, R\setminus \{v\})$)\\
            }
        }
\end{algorithm}

\paragraph{Example}
Assume that the graph $G$ in Fig. \ref{fig: kdc-structure} is the input graph and $k=1$.
In Fig. \ref{fig:branch-emp}, we present the branching tree of Alg. \ref{alg: branch_framework}. Let us explain some representative branches.
\begin{itemize}
    \item In branch (1), $P$ is initialized as $\emptyset$ and $R$ is $V$. By the branching rule at line 14 in Alg. \ref{alg: branch_framework}, $v_1$ is selected as the branching vertex, and branches (2) and (4) are produced.
    \item In branch (2), by line 14 in Alg. \ref{alg: branch_framework}, $v_2$ is the branching vertex, two subbranches are produced.    
    \item In branch (3), because $r(\{v_1,v_2\})=0$, by lines 6-9 in Alg. \ref{alg: branch_framework},  then the $MC(\cdot)$ is called to find the maximum clique in $CN(P)=\{v_3,v_4,v_5\}$. 
    \item In branch (6), $P=\{v_2\}$ and $v_2$ is adjacent to $v_3,v_4,v_5$. Because there is no $k$-defective set $D$ such that $\{v_2\}\subseteq D\subseteq \{v_2,v_3,v_4,v_5\}$, by lines 3-4 in Alg. \ref{alg: branch_framework}, branch (6) is pruned.
    \item In branch (7), $\exists v_3\in R$ that $v_3$ is all adjacent to $P\cup R$, by lines 11-12 in Alg. \ref{alg: branch_framework}, $v_3$ is reduced in the subbranches. 
    \item In branch (8), $R=\emptyset$, by line 6 in Alg. \ref{alg: branch_framework}, $MC(\cdot)$ is called to find the maximum clique from $V$.
\end{itemize}\

One can simply justify that the branching algorithm enumerates all $k$-defective sets that subsume $P$ from $P \cup R$ without repetition. Now we estimate the complexity of the branching algorithm.

\begin{lemma}\label{lemma:kdsEnum}
Given an instance $I=(G, P,R)$ where $P\neq \emptyset$, the branching algorithm finds the maximum $k$-defective clique of $I$ in time $O(|R|^{2k}\gamma_c^{m})$ where 
$m=|CN(P)|$ and $\gamma_c$ is the base factor in the time complexity for maximum clique algorithm. 
\end{lemma}

\begin{proof}[Proof Sketch]
Because the size of a $k$-defective set is bounded by $2k$, there are at most $|R|^{2k}$ different $k$-defective sets.
For each $k$-defective set $D$ that $ D \supseteq P$, we find the maximum clique algorithm in $CN(D)\subseteq CN(P)$ with a time complexity of $O^*(\gamma_c^{|CN(P)|})$. So, the complexity of the branching algorithm is $O^*(|R|^{2k}\gamma_c^{|CN(P)|})$.    
\end{proof}

To our knowledge, $\gamma_c$ can be as small as $1.20$ by \citep{xiao2017exact}.
In our experiments, we use the maximum clique algorithm in \citep{chang2019efficient}. As far as we know, this is one of the most practically efficient maximum clique algorithms in large real-world graphs, despite there being no analysis of its time complexity. 

\subsection{The Whole Decompose-and-Branch Algorithm}
We further investigate the \textit{graph decomposition rule} in the algorithm. 
The rule reduces the original input instance $I=(G,\emptyset,V)$ to instances $I_1,...,I_{n}$  such that the solution of $I$ is the maximum size in these $\omega_k(I_i)$. 

Because the rule is defined with an order of set $V$, let us first denote the order as $v_1,v_2,...,v_n$. We also denote $N(v_i)\cap \{v_{i+1},...,v_n\}$ as $N^+(v_i)$, and $\left(\bigcup_{v\in N^+(v_i)}N(v)\right)\cap \{v_{i+1},...,v_n\}\setminus N^+(v_i)$ as $N^{2+}(v_i)$. ($N^{2+}(v_i)$ is a subset of vertices that rank after $v_i$, and each $u\in N^{2+}(v_i)$ is adjacent to at least one vertex in $N^+(v_i)$). 
We say that $N^{2+}(v_i)$ is two-hop adjacent to $v_i$ with respect to the given order.

\begin{lemma}
Give an instance $I=(G,\emptyset,V)$ and an order of $V$ which is denoted by $v_1,...,v_n$. 
Then $\omega_k(I)=\mathop{\max}_{i=[n]}\left(\omega_k(I'_i),\omega_k(I''_i)\right)$ where $I'_i=(G[\{v_i\} \cup N^+(v_i)\cup N^{2+}(v_i)],\{v_i\}, N^+(v_i)\cup N^{2+}(v_i))$ and $I''_i=(G[\{v_i\}\cup N^+(v_i)],\emptyset,N^+(v_i))$.   
\end{lemma}
To understand this lemma, one can think that in a non-trivial $k$-defective clique, the length of shortest path of any two vertices is bounded by 2. So we can find the largest $k$-defective clique from all diameter-two bounded subgraphs.

Clearly, a order with small $|N^+(v_i)|$ and $|N^{2+}(v_i)|$ is preferred. 
We introduce the \textit{degeneracy ordering}. 
Given a graph $G$ with $n$ vertices, the degeneracy ordering is an order $v_1, v_2,...,v_n$ such that the largest $|N^+(v_i)|$ for all $i\in [n]$ is minimized. 
A degeneracy ordering of $G=(V,E)$ can be computed in time $O(|V|+|E|)$ by repeatedly removing a vertex with the minimum degree in the remaining graph until the graph becomes empty \citep{matula1983smallest}. 
Given a degeneracy ordering, $|N^+(v_i)|$ is bounded by the \textit{degeneracy of $G$} which is normally denoted as $d(G)$.
Note that the degeneracy ordering has also been investigated in many maximum clique and near-clique problems  \citep{chang2019efficient,zhou2021improving,wang2023gap}.

Now, we integrate the decomposition rule with our branching algorithm and obtain the whole decompose-and-branch algorithm DnBk in Alg. \ref{alg: 2hop_branch_framework}. 
In line 3, we use a linear-time heuristic algorithm like that in \citep{chang2023efficient} to find an initial $k$-defective clique (see the supplementary for a complete description of this heuristic).
This initial solution is used to prune branches using our upper bounding rule in the following section. 
In lines 7 and 9, we call the branching algorithm with instances $I'_i$ and $I^{''}_i$, respectively.

\begin{algorithm}[ht!]
    \DontPrintSemicolon
        \caption{The decomposition-and-branching algorithm for maximum $k$-defective clique}
        \label{alg: 2hop_branch_framework}
        \KwIn{A graph $G$ and an positive integer $k$, where $n=|V|$}
        \KwOut{A maximum $k$-defective clique $Q^*$ in $G$}
        \textbf{Procedure} DnBk($G, k$)\\
        \Begin{
            Use a linear-time heuristic to find an initial  solution $Q$ from $G$\\
            Order $V$ as $v_1,...,v_n$ by degeneracy ordering \\
            \For{$i \in [n]$}{
                Build $I'_i=(G[\{v_i\} \cup N^+(v_i)\cup N^{2+}(v_i)],\{v_i\}, N^+(v_i)\cup N^{2+}(v_i))$\\
                branching$(I'_i)$\\
                Build  $I''_i=(G[\{v_i\}\cup N^+(v_i)],\emptyset,N^+(v_i))$ \\
                branching$(I''_i)$\\
            } 
            \Return $Q^*$
        }       
\end{algorithm}

\begin{theorem}
    Given a graph $G$ and a constant non-negative integer $k$, denoting the degeneracy of $G$ as $d(G)$, the DnBk algorithm finds the maximum $k$-defective clique in time $O^*({\gamma_c}^{d(G)})$, where $\gamma_c$ is the base of the exponential factor in the time complexity for maximum clique algorithm. 
\end{theorem}
\begin{proof}
    First, let us consider instance $I'_i$ where $P=\{v_i\}$ and $R=\{N^+(v_i)\cup N^{2+}_{G}(v_i)\}$. Clearly, we have $|R|\leq \min(d(G)\Delta(G)+d(G), |V|)$. On the other hand, because $v_i\in P$, so $CN(P)$ is bounded by $|N^+(v_i)|$ and $|N^+(v_i)|\leq d(G)$. By Lemma \ref{lemma:kdsEnum}, the time complexity of the branching algorithm with $I'_i$ is $O((d(G)\Delta+d(G))^{2k+O(1)}\gamma_c^{d(G)})$.
    Second, consider instance $I''_i$ where $G=G[\{v_i\}\cup N^+_G(v_i)]$, $P=\emptyset$ and $R=N^+(v_i)$. Clearly, we have $|R|\leq d(G)$ and $|CN(P)|\le |\{v_i\}\cup N^+_G(v_i)|\le d(G)+1$. Consequently, the time complexity of the branching algorithm with $I''_i$ is $O(d(G)^{2k+O(1)}\gamma_c^{d(G)+1})$.
    Because $k$ is a constant, the total time complexity of DnBk is $O^*(\gamma_c^{d(G)})$.
\end{proof}
\if 0
\begin{table}[htbp]
  \centering
  \caption{\color{red}[This table is not necessary!]A comparison of the time complexity for maximum $k$-defective clique computation and maximal $k$-defective clique enumeration algorithms that have nontrivial time complexities}

    \resizebox{\columnwidth}{!}{
         \begin{tabular}{ccc}
    \toprule
    Algorithm & Time Complexity & Problem \\
    \midrule
    MADEC+ \citep{chen2021computing} & $O^*(\gamma_{2k}^n)$ & Maximum $k$-defective clique \\
    Pivot2+ \citep{dai2023maximal} & $O^*(\gamma_{k}^d)$ & Maximal $k$-defective clique \\
    kDC \citep{chang2023efficient}   & $O^*(\gamma_{k}^n)$ & Maximum $k$-defective clique \\
    kDC-2 \citep{chang2024maximum} & $O^*(\gamma_{k-1}^d)$ & Maximum $k$-defective clique \\
    DnBk(This paper)  & $O^*(\gamma_c^d)$ & Maximum $k$-defective clique \\
    \bottomrule
    \end{tabular}%
    }
  \label{tab:time-complexity-cmp}%
\end{table}%
\fi
 \paragraph{Remark}
 Under the assumption that $k\ge 1$ is constant, this time complexity improves the known results from two perspectives. 
 First, the exponential base $\gamma_c$, which can be as small as 1.2 so far, is better than the state-of-the-art \citep{chang2024maximum}.
Second, the parameter (exponent) $d(G)$ is also smaller than $n$ in large graphs. 
For example, in soc-Livejournal, $d(G)$ is $213$ in contrast to $n=4033137$. 

\section{Exploring Tighter Upper Bound}
We investigate the \textit{upper bounding rule} to prune the branching algorithm. 
Given an instance $I=(G=(V,E),P,R)$, the bounding rule introduces an upper bound on $\omega_k(I)$. 
Suppose that there is currently a best known lower bound, say $lb$. (Initially, $lb=|Q|$ where $Q$ is found using a linear-time heuristic.)
In the branching algorithm, if the upper bound of $I$ is not greater than $lb$, we can claim that there is no hope of finding a $k$-defective clique larger than $lb$ in $I$, and thus we can stop the search. 
This is the principle of bounding in branch-and-bound algorithms. 

\subsection{Revisiting the Existing Upper Bounds} 

We first review the \textit{packing bound} and \textit{coloring bound} rules that are independently proposed in \citep{gao2022exact} and \citep{chen2021computing}.

    Before that, in an instance $I=(G,P,R)$, we define the \textit{weight of $u\in V$} as $w(u)=|P|-|N(u)\cap P|$.
    The $w(u)$ can be seen as the number of vertices that are not adjacent to $u$ in $P$. 
    Clearly, we can move some vertices of $V\setminus P$ to $P$. 
    If the sum of weight does not exceed $r(P)$, $P$ is still a $k$-defective clique. 
    This is the essence of packing bound.
    
    \begin{bounding}[Packing Bound  \citep{gao2022exact}]
    \label{bounding_packing}
        Assume that the vertices in $V\setminus P$ are sorted in non-decreasing order of weight $w(v)$, say $v_1,....,v_{|V\setminus P|}$. Then, $\vert P\vert + i$ is an upper bound of $\omega_k(I)$ where $i$ is the largest number that $\sum_{j=1}^i w(v_j)\leq k$.  
    \end{bounding}
    
     The coloring bound depends on a graph coloring partition of $V\setminus P$. That is to say, $V\setminus P$ should be partitioned into $\chi$ non-disjoint independent sets before computing the coloring bound. 
     \begin{bounding}[Coloring Bound \citep{chen2021computing}]
     \label{bounding_coloring}
        Assume $V\setminus P$ is partitioned into $\chi$ non-disjoint independent sets $\Pi_1,...,\Pi_\chi$. Then $|P|+\sum_{j=1}^{\chi}min({\lfloor \frac{1+\sqrt{8k+1}}{2} \rfloor},{\vert \Pi_j\vert })$ is an upper bound of $\omega_k(I)$.   
     \end{bounding}
     This bound holds because at most $\lfloor \frac{1+\sqrt{8k+1}}{2} \rfloor$ vertices in an independent set can be contained in the same $k$-defective clique.

Recently, a \textit{Sorting} bound was proposed in \citep{chang2023efficient}, and another partition-based bound called \textit{Club} was given in \citep{jin2024kd}. These bounding rules are more complicated than the packing and coloring bounds. 
In the following, we will investigate the \textit{conflict relationship} of the vertices and design a new bounding framework that hybrids the packing, coloring bounds and the conflict relationship.


\subsection{A New Upper Bound Based on Packing, Coloring and Conflict Vertices}

Let us first introduce  \textit{conflict vertices}.
\begin{definition}[Conflict Vertices]
Give an instance $I=(G,P, R)$ and a lower bound $lb$, two different vertices $\{u,v\}$ ($u,v\in V$) are conflict vertices if $u$ and $v$ cannot be in the same $k$-defective clique of size larger than $lb$.    
\end{definition}

There are several rules to identify the conflict vertices for a given instance. So far, we use the following rules.
\begin{enumerate}
    \item If $u\in R$, $v\in V\setminus (P\cup R)$ and $\{u,v\}\notin E$, then $\{u,v\}$ are   conflict vertices.
   \item If $u,v\in V\setminus (P\cup R)$ and $\{u,v\}\notin E$, then $\{u,v\}$ are conflict vertices.    
    \item If $u,v\in V\setminus P$ and $r(P\cup \{u,v\})<0$, then $\{u,v\}$ are conflict vertices.
    \item If $u,v\in V\setminus P$, $\{u,v\}\in E$ and $|N(u)\cap N(v)\cap V\setminus P|\le lb-(|P|+r(P)-w(u)-w(v)+2)$, then $\{u,v\}$ are conflict vertices.
    \item If $u,v\in V\setminus P$, $\{u,v\}\notin E$ and $|N(u)\cap N(v)\cap V\setminus P|\le lb-(|P|+r(P)-w(u)-w(v)+1)$, then $\{u,v\}$ are conflict vertices.
\end{enumerate}


For each pair of vertices $u,v\in V\setminus P$, we check whether $\{u,v\}$ are conflict vertices. In each branch, we keep updating the conflict information. 
Note that more rules can possibly be discovered for future extension.
We use these rules because they are easy to implement and identifying them does not require too much computation effort.

In the following, we denote $conflict(u,v)=1$ if $u$ and $v$ are conflic vertices and $conflict(u,v)=0$ otherwise. Now, we are ready to formulate our bounding rule with this conflict information, namely, the \textit{packing, coloring with conflict bounding rules}.

\begin{bounding}[The Packing, Coloring with Conflict Bounding Rules]
\label{bounding_pack_color_exclusive}
    Given an instance $I=(G,P,R)$ and a lower bound $lb$, assume that $V\setminus P$ is partitioned into $\chi$ independent sets $\Pi_1,...,\Pi_\chi$. 
    For any $u,v\in V\setminus P$, $conflict(u,v)=1$ if $\{u,v\}$ are conflict vertices and $conflict(u,v)=0$ otherwise.    

     The optimal objective value of the optimization problem described in the following is an upper bound of $\omega_k(I)$.

\begin{align}
    \max_{S_i\subseteq \Pi_i,\forall i\in [\chi]}  & \quad |P|+\sum_{i=1}^{\chi}{|S_i|} & \text{ \textbf{OPT}} \nonumber\\
    \text{s.t.} &  \sum_{i=1}^{\chi}{\left( \binom{|S_i|}{2}+ \sum_{u\in S_i}w(u)\right)}\le r(P) \label{constraint_pack_color}\\
    & conflict(u,v)=0, \forall u,v \in S_i\ \forall i\in [\chi]   \label{constraint_conflict} 
\end{align}
\end{bounding}

The optimization problem OPT asks for a subset $S_i$ from each $\Pi_i$ such that $S_i$ meets the (\ref{constraint_pack_color}), the \textit{packing-and-coloring constraint} and (\ref{constraint_conflict}), the \textit{conflict constraint}.
When $\chi=1$, the OPT problem is a special case of the NP-hard knapsack problem with conflict graph \citep{bettinelli2017branch}. 
Hence, it is unlikely to efficiently solve the optimization problem due to its NP-hard nature. 

However, the optimal solution to the relaxed OPT problem is still a feasible bound of $\omega_k(I)$. 
In the following, we first introduce an algorithm to solve the relaxed OPT problem which excludes the conflict constraint (\ref{constraint_conflict}). 
Then, we improve this algorithm so that some of the conflict constraint \ref{constraint_conflict} can be satisfied. 
Clearly, the objective value is an upper bound of the OPT problem, which is in turn, an upper bound of $\omega_k(I)$.


\subsubsection{Dynamic Programming without Conflict Constraint}\label{subsec:packcolor}

We propose the following dynamic programming-like algorithm to solve the OPT without considering the conflict constraint.

\paragraph{DPBound Algorithm}
\begin{enumerate}   
    \item For each independent set $\Pi_i$ where $i\in [\chi]$, order the vertices in $\Pi_i$ by the non-decreasing order of the weight $w(\cdot)$. Assume that $\Pi_i$ is ordered as $v_i^1,v_i^2,...,v_i^{|\Pi_i|}$. For each $r \in \{0,...,r(P)\}$, find the maximum number $j\in \{0,...,|\Pi_i|\}$ such that $\binom{j}{2}+ \sum_{k=1}^{j}w(v_i^k)\le r$. Assign $t(i,r)=j$..
    \item For each independent set $R_i$ ($i\in [\chi]$) and $r\in \{0,...,r(P)\}$, define $f(i,r)$ as the optimal value when $\chi =i$ and $r(P)=r$. Then $f(i,r)$ can be computed by the following recurrence relations. 
    \begin{align*}
     f(1, r)=& t(1,r), \textbf{\ \ \ \ } \forall r\in \{0,...,r(P)\}\\   
     f(i, r)=& \displaystyle \max_{r'\in \{0,...,r\}}f(i-1,r')+t(i-1,r-r'), \\
     &\forall i \in \{2,...,\chi\} 
    \end{align*} 
    \item Return $|P|+f(\chi, r(P))$ as the optimal objective value.
\end{enumerate}

For the DPBound algorithm, the first step can be easily implemented in time $O(|V\setminus P|)$ using a linear-time sorting algorithm.
In the second step,  we use a standard bottom-up dynamic program to compute the relation, which has running time $O(\chi \cdot r(P)^{2})$. 
The total running time is $O(\vert V\setminus P\vert +\chi \cdot r(P)^{2})$.

\paragraph{Example} 
We show an example of the DPBound algorithm in Fig. \ref{fig:bound-example}. 
In the example, $V\setminus P$ is partitioned into two independent sets $\Pi_1=\{v_1^1. v_1^2. v_1^3, v_1^4\}$ and $\Pi_2=\{v_2^1, v_2^2, v_2^3, v_2^4\}$, and each set is sorted in non-decreasing order of the weight $w(\cdot)$. 
The two tables on the left of the bottom show the results of $t(i,r$) and $f(i,r)$ by running the DPBound algorithm. 
Finally, the bound is $|P|+f(2,5)=8$.

\begin{figure}[htbp]
\vspace{-2mm}
\includegraphics[width=\columnwidth]{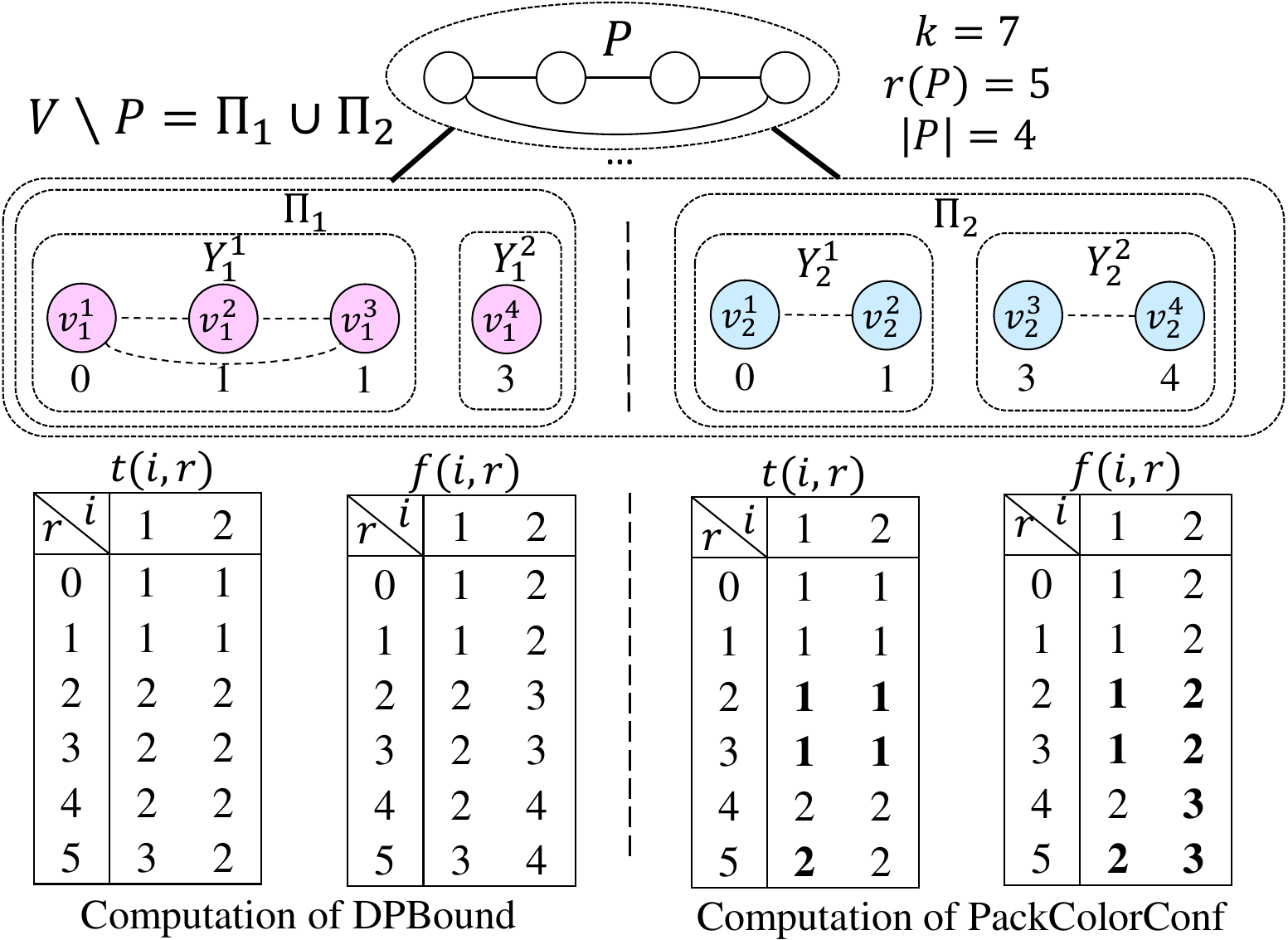}
\caption{
An example of running the DPbound with and without conflict vertices. The dashed lines represent conflict relations. 
The weight of vertices in $V\setminus P$, $w(\cdot)$, is displayed below the vertices.}
\label{fig:bound-example}
\vspace{-2mm}
\end{figure}

\subsubsection{The Algorithm for Packing, Coloring and Conflict Bounds}
\label{subsubsec-alg-packcolorconf}
Now, we propose an algorithm to solve OPT such that the packing-and-coloring constraint is satisfied and the conflict constraint is partially satisfied.

\begin{algorithm}[ht!]
    \DontPrintSemicolon
        \caption{Approximate the packing, coloring bound with conflict constraint.}
        \label{alg:algub2}
        \KwIn{An instance $I=(G,P,R)$, a known solution $Q'$, and a partition of $V\setminus P$ into independent sets $\mathbf{\Pi}=\{\Pi_1,...,\Pi_\chi\}$.}
        \KwOut{An upper bound of $\omega_k({I})$.}   
        PackColorConf($I=(G,P,R), lb, \mathbf{\Pi}$) \\
        \Begin{
            Build up the conflict function $conflict(\cdot)$ for each pair of vertices $\{u,v\}$ in $V\setminus P$ \\
            \For{$\Pi_i$ from $\Pi_1$ to $\Pi_\chi$}{
                         
                $\sigma\gets 1,  \mathcal{Y}\gets \{\emptyset\}$\\       
                \While{$\Pi_{i}\neq \emptyset$}{
                    Sort $\Pi_{i}$ by non-decreasing order of $w(\cdot)$\\
                    Build an empty set $Y_{\sigma}$\\
                    \For{each $v$ in $\Pi_{i}$ by the sorted order}{
                        \If{$Y_{\sigma}=\emptyset$, or $\forall u\in Y_{\sigma}\text{ } conflict(u,v)=1$ }{
                            $Y_{\sigma}\gets Y_{\sigma}\cup \{v\}$\\
                        }
                    }
                    $\mathcal{Y}\gets \mathcal{Y}\cup \{Y_\sigma\},\Pi_{i}\gets \Pi_{i} \setminus{Y_{\sigma}}$\\
                    $\sigma\gets \sigma   +1 $                    
                }                
                \For {$r$ from $0$ to $r(P)$}{    
                    Compute the maximum index $j\in [\sigma]$ such that $\binom{j}{2}+\sum_{l=1}^{j}\left(\min_{u\in Y_{l}}w(u)\right) \leq r$.\\
                    $t(i,r)\gets j$
                }
            }
        Use the linear recurrence in step 2 in the DPBound algorithm and compute $f(\chi,r(P))$\\
            \Return $|P|+f(\chi, r(P))$
        }
\end{algorithm}

The general idea of PackColorConf is as follows. We partition each $\Pi_i$ into subsets $Y_1,...,Y_{\sigma}$ such that vertices in each set $Y_j$ ($j\in [\sigma]$) are mutually conlict (lines 6-13). 
Then, for each $\Pi_i$, we compute the values of $t(i,r)$ for each $r\in \{0,...,r(P)\}$ by line 15. 
If the partition of $\Pi_i$ is non-trivial (that is, a subset $Y_j$ has more than one vertex), we may obtain $t(i,r)$ values smaller than in the first step of PackColorConf, thus tightening the final bound.

\paragraph{Example}
    We show an example of Alg. \ref{alg:algub2} in Fig. \ref{fig:bound-example}. 
    Suppose that $\Pi_1$ is partitioned into two mutual conflict sets $Y_1^1$ and $Y_1^2$. 
    Taking the computation of $t(1,3)$ as an example. 
    Because $\arg\min_{u\in Y_1^1}=v_1^1,\arg\min_{u\in Y_1^2}=v_1^4$ and $w(v_1^1)+w(v_1^4)+1>3$ but $w(v_1^1)+0<3$, we can get $t(1,3)=1$ which is smaller than that $2$ obtained in the DPBound algorithm.
     After computing all the values of $t(i,r)$, we calculate the final bound $|P|+f(2,5)=7$. 
     This is smaller than $8$, the bound obtained by the DPBound algorithm.

\subsection{Comparison of Existing Bounding Rules}
\label{subsec_discuss_bound}
We compare the recent bounding rules for the maximum $k$-defective clique through the domination relations.
For a maximize problem, we say that \textit{Bounding Rule A dominates Bounding Rule B} if, for the same instance, the bound found by A is not larger than that found by B.
In Fig. \ref{fig:tight-compare}, we show the dominance relations among the packing bound \citep{gao2022exact}, coloring bound \citep{chen2021computing}, sorting bound \citep{chang2023efficient}, club bound \citep{jin2024kd} and the bound obtained by PackColorConf.
Note that PackColorConf only obtains the upper bound of the OPT problem (Section \ref{subsubsec-alg-packcolorconf}). 
We can see that PackColorConf is as good as Sorting bound, which in turn is as good as Coloring and Packing bounds. Due to the heuristic nature of the Club, we only know that it dominates Packing so far.

\begin{figure}[H]
\centering
\includegraphics[width=0.95\columnwidth]{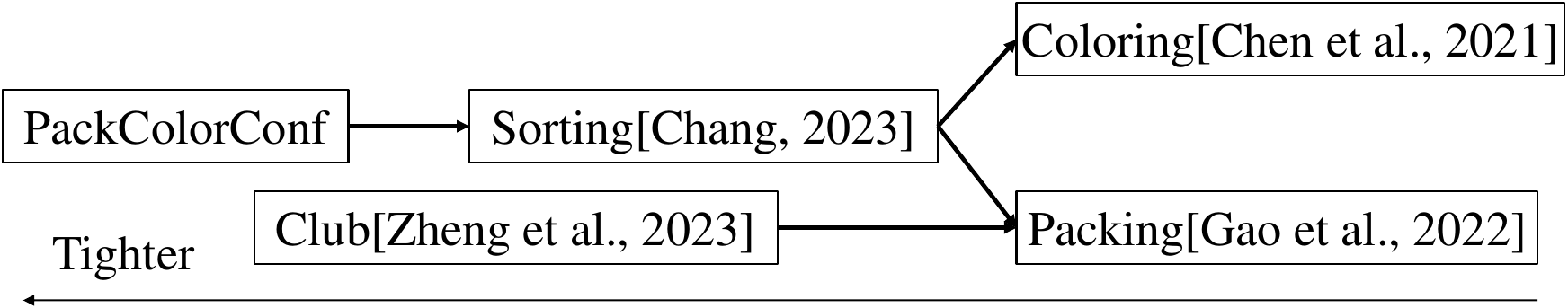}
\caption{The dominance relations among bounding rules. $A \longrightarrow B$ indicates that A dominates B.}
\label{fig:tight-compare}
\vspace{-2mm}
\end{figure}

\section{Experiments}
In this section, we evaluate our algorithms and bounds empirically.  
Our algorithm is written in C++11 and compiled by g++ version 9.3.0 with the -Ofast flag. 
All experiments are conducted on a machine with an Intel(R) Xeon(R) Gold 6130 CPU @ 2.1GHz and a Ubuntu 22.04 operating system. 
Hyper-threading and turbo are disabled for steady clock frequency.

We implement our DnBk branching algorithm that uses PackColorConf to upper bound the solution (and thus prune branches). 
\footnote{Codes and experiment data are available at \url{https://github.com/cy-Luo000/Maximum-k-Defective-Clique.git}.} 
We empirically compare DnBk with the state-of-the-art algorithms kDC \citep{chang2023efficient}, KDClub  \citep{jin2024kd}. 
Three real-world datasets are used as benchmark graphs, the same as in \citep{chang2023efficient,gao2022exact}.
Because the kDC-2 codes~\cite{chang2024maximum} are not publicly available, we only partially compare this algorithm using the data in the paper. This is reported in Section 6.2 in the supplementary file. 

\begin{figure*}[ht]
\vspace{-2mm}
\includegraphics[width=\linewidth,height=0.4\linewidth]{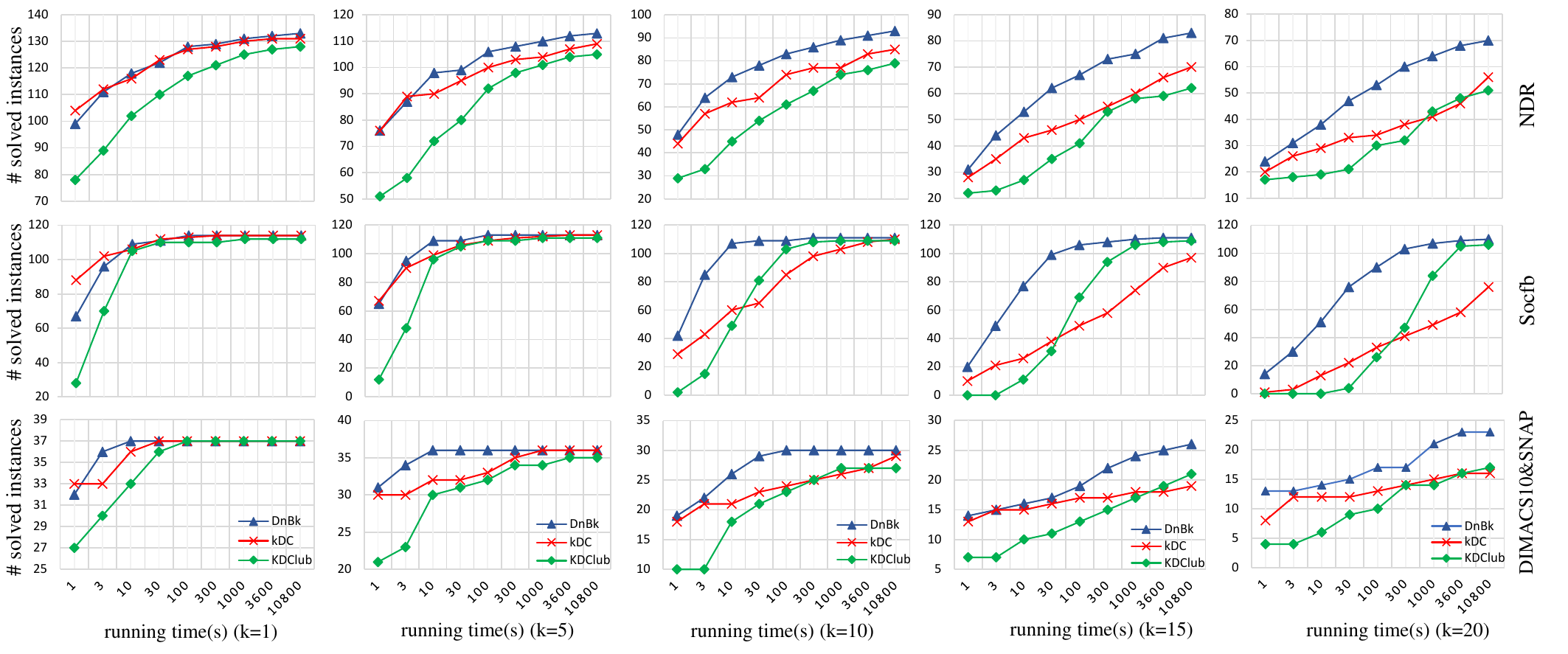}
\caption{
Number of solved instances for NDR, Socfb  and DIMACS10\&SNAP graphs, with $k=1,5,10,15,20$ and a cutoff time $10800$s.}
\label{fig:overall-performance}
\end{figure*}

\begin{itemize}
    \item \textbf{The NDR} dataset contains 139 real-world graphs with up to $5.87 \times 10^7$ vertices from the Network Data Repository \footnote{\url{https://networkrepository.com/}}, including social networks, biological networks, collaboration networks and so on.
    \item \textbf{The DIMACS10\&SNAP}  dataset contains 37 graphs with up to $5.61 \times 10^6$ vertices, 27 of them are from the DIMACS10 competition\footnote{\url{https://networkrepository.com/dimacs10.php}}, and 10 are from SNAP  library \footnote{\url{http://snap.stanford.edu/data/}}.
    \item \textbf{The Socfb} set contains 114 graphs with up to $1.18\times 10^8$ vertices, extracted from Facebook social networks. \footnote{\url{https://networkrepository.com/socfb.php}}
\end{itemize}

We record the computation time for running an algorithm on a graph instance for a specific $k$ from $\{1,5,10,15,20\}$, with a cutoff time $10800$ seconds (3 hours) for each test. 
This configuration is the same as that in \citep{chang2023efficient}.
The recorded time excludes the I/O time of loading the graph instance from the disk to the main memory. 
Because the problem we solve is to find the maximum $k$-defective clique of size larger than or equal to $k+2$, we omit instances whose maximum $k$-defective clique size is smaller than $k+2$ for each $k$.

\subsection{Overall Performance}
In Fig. \ref{fig:overall-performance}, we show the number of solved instances within different time frames. 
The codes for KDBB \citep{gao2022exact} are not available because the authors cannot provide them. 
So, it is not included in the figure.
In fact, both kDC and KDClub outperform KDBB in their reports. Our DnBk also solves all instances that were solved in \citep{gao2022exact} but consumes much less time.

From Fig. \ref{fig:overall-performance}, DnBk solves more instances or at least as many instances as kDC and KDClub in 3 hours for all sets. 
For small $k$ values such as $1$ or $5$, these algorithms can solve almost all instances of DIMACS10\&SNAP and Socfb to optimal in 3 hours, but DnBk still solves more in a shorter time frame.
For $k=10,15$ and $20$, DnBk clearly outperforms kDC and KDClub. 
\if 0
\textcolor{blue}{From table \ref{tab:bound-cmp} we can also see that the degeneracy of a large realworld graph is much smaller than its size.}
\fi
Specifically, we find that DnBk closes 3,2,5,8 and 6 NDR instances that were not solved by other algorithms for $k=1,5,10,15$ and $20$, respectively. 


\tabcolsep=0.03cm
\begin{table}[H]
  \centering
  \caption{The time (in seconds) and search tree nodes (branch) number (in $10^3$ scale) of differen algorithms on example graphs.
 We show cases where $k=1,10,20$.  \emph{opt} indicates the maximum solution size. \textit{OOT} means a failure due to out-of-time. The instances that cannot be solved by any algorithms within 10800s are omitted.} 
 \vspace{\topsep}
    \resizebox{\columnwidth}{!}{
   \begin{tabular}{ccccccccccccc}
    \toprule
    \multirow{2}[2]{*}{\textbf{Graph}} & \multirow{2}[2]{*}{\textbf{k}} & \multirow{2}[2]{*}{\textbf{opt}} & \multicolumn{5}{c}{\textbf{\#Nodes(10\textsuperscript{3})}} & \multicolumn{5}{c}{\textbf{Time(s)}} \\
\cmidrule(lr){4-8}  \cmidrule(lr){9-13}          &       &       & DnBk & DnBk\textsubscript{S} & DnBk\textsubscript{Club} & DnBk\textsubscript{P} & DnBk\textsubscript{C} & DnBk & DnBk\textsubscript{S} & DnBk\textsubscript{Club} & DnBk\textsubscript{P} & DnBk\textsubscript{C} \\
    \midrule
  \multirow{3}{*}{\tabincell{c}{ socfb-Auburn71 }}
  & 1     & 58    & \textbf{0.1} & 0.1   & 0.1   & 3.1   & 0.2   & 3.3   & \textbf{3.2} & 3.9   & 4.6   & 6.6 \\
  & 10    & 63    & \textbf{14.2} & 94.4  & 92.1  & 4902.4 & 1084.3 & \textbf{6.8} & 21.2  & 21.8  & 2389.0 & 712.2 \\
  & 20    & 67    & \textbf{1274.8} & 7720.3 & 10349.2 & OOT   & OOT   & \textbf{233.5} & 1329.9 & 1801.3 & OOT   & OOT \\
\midrule
\multirow{3}{*}{\tabincell{c}{ socfb-Tennessee }}
  & 1     & 59    & \textbf{0.0} & 0.0   & 0.0   & 0.6   & 0.0   & 2.1   & \textbf{2.0} & 2.3   & 3.3   & 3.9 \\
  & 10    & 66    & \textbf{0.6} & 16.4  & 16.5  & 1339.0 & 29.8  & \textbf{2.1} & 4.1   & 4.3   & 418.4 & 17.0 \\
  & 20    & 70    & \textbf{125.5} & 2045.0 & 2590.3 & 9870.6 & 4720.0 & \textbf{26.0} & 406.7 & 395.0 & 3395.2 & 789.5 \\
\midrule
\multirow{3}{*}{\tabincell{c}{ socfb-Texas84 }}
  & 1     & 52    & 0.2   & 0.2   & \textbf{0.1} & 6.4   & 0.3   & \textbf{4.1} & 4.5   & 5.2   & 5.9   & 8.0 \\
  & 10    & 58    & \textbf{8.4} & 40.0  & 38.0  & 4730.7 & 2230.3 & \textbf{6.2} & 11.9  & 14.5  & 2379.0 & 1641.0 \\
  & 20    & 61    & \textbf{3193.5} & 17647.5 & 22635.2 & OOT   & OOT   & \textbf{543.2} & 3182.7 & 3453.5 & OOT   & OOT \\
\midrule
\multirow{3}{*}{\tabincell{c}{ tech-WHOIS }}
  & 1     & 59    & 0.2   & 0.2   & \textbf{0.1} & 5.4   & 0.3   & 0.5   & \textbf{0.5} & 0.6   & 1.8   & 1.3 \\
  & 10    & 64    & \textbf{95.7} & 849.5 & 836.0 & OOT   & 3367.5 & \textbf{22.5} & 173.9 & 147.3 & OOT   & 3439.0 \\
  & 20    & 69    & \textbf{5667.7} & OOT   & OOT   & OOT   & OOT   & \textbf{1265.1} & OOT   & OOT   & OOT   & OOT \\
\midrule
\multirow{3}{*}{\tabincell{c}{ soc-Epinions1 }}
  & 1     & 24    & 0.4   & 0.4   & \textbf{0.3} & 5.5   & 1.0   & 5.2   & 5.2   & 7.1   & \textbf{4.6} & 10.3 \\
  & 10    & 29    & \textbf{131.7} & 507.2 & 511.0 & 13029.3 & OOT   & \textbf{22.2} & 51.6  & 54.3  & 3199.5 & OOT \\
  & 20    & 32    & \textbf{26184.3} & 41184.5 & 97583.9 & OOT   & OOT   & \textbf{2072.6} & 3433.7 & 5738.3 & OOT   & OOT \\
\midrule
\multirow{3}{*}{\tabincell{c}{ email-EuAll }}
  & 1     & 17    & \textbf{0.0} & 0.0   & 0.0   & 0.1   & 0.0   & \textbf{0.2}   & 0.2 & 0.3   & 0.2   & 0.4 \\
  & 10    & 21    & \textbf{39.0} & 61.8  & 80.4  & 1229.5 & 3375.9 & \textbf{4.1} & 5.0   & 6.2   & 117.1 & 371.9 \\
  & 20    & 24    & \textbf{1912.4} & 2034.8 & 9827.5 & OOT   & OOT   & \textbf{360.4} & 446.5 & 776.9 & OOT   & OOT \\
\midrule
\multirow{3}{*}{\tabincell{c}{ as-22july06 }}
  & 1     & 18    & \textbf{0.0} & 0.0   & 0.0   & 0.0   & 0.0   & 0.0   & \textbf{0.0} & \textbf{0.0} & 0.0   & 0.0 \\
  & 10    & 22    & \textbf{2.4} & 2.6   & 5.0   & 18.7  & 24.9  & \textbf{0.3} & 0.3   & 0.6   & 2.8   & 3.5 \\
  & 20    & 24    & \textbf{474.7} & 483.5 & 1591.8 & 14409.8 & OOT   & \textbf{47.9} & 61.9  & 115.5 & 1630.6 & OOT \\
\midrule
\multirow{2}{*}{\tabincell{c}{ soc-orkut }}
  & 1     & 48    & 0.5   & 0.5   & \textbf{0.4} & 261.5 & 1.9   & \textbf{718.7} & 768.7 & 782.2 & 1250.9 & 1026.8 \\
  & 10    & 53    & \textbf{157.3} & 422.7 & 313.5 & OOT   & OOT   & \textbf{936.8} & 1360.3 & 1324.8 & OOT   & OOT \\
\midrule
\multirow{2}{*}{\tabincell{c}{ rt-retweet-crawl }}
  & 1     & 14    & \textbf{0.0} & 0.0   & 0.0   & 0.0   & 0.0   & 0.7   & \textbf{0.6} & 0.6   & 1.5   & 1.1 \\
  & 10    & 17    & \textbf{0.0} & 0.0   & 0.8   & 4.6   & 5.9   & 0.9   & \textbf{0.7} & 0.9   & 1.8   & 2.1 \\
\midrule
\multirow{3}{*}{\tabincell{c}{ tech-as-skitter }}
  & 1     & 68    & \textbf{0.0} & 0.0   & 0.0   & 0.0   & 0.0   & 1.9   & 1.8   & 1.9   & \textbf{1.7} & 3.3 \\
  & 10    & 72    & \textbf{1.1} & 1.2   & 1.8   & 13.6  & 9.3   & \textbf{4.0} & 4.4   & 4.7   & 26.2  & 15.6 \\
  & 20    & 75    & \textbf{24.0} & 24.8  & 54.5  & OOT   & OOT   & \textbf{20.0} & 20.9  & 32.9  & OOT   & OOT \\
\midrule
\multirow{2}{*}{\tabincell{c}{ soc-youtube }}
  & 1     & 17    & \textbf{0.0} & 0.0   & 0.0   & 0.1   & 0.0   & \textbf{2.6} & 2.7   & 3.2   & 2.9   & 4.9 \\
  & 10    & 22    & \textbf{48.5} & 55.2  & 66.5  & 1712.7 & 3300.9 & \textbf{33.4} & 36.1  & 40.7  & 722.7 & 1742.0 \\
\midrule
\multirow{3}{*}{\tabincell{c}{ socfb-UGA50 }}
  & 1     & 53    & 0.2   & 0.2   & \textbf{0.2} & 12.1  & 0.5   & 4.1   & \textbf{4.0} & 5.0   & 7.7   & 7.5 \\
  & 10    & 58    & \textbf{27.5} & 90.5  & 82.2  & OOT   & 5319.5 & \textbf{10.4} & 23.6  & 31.4  & OOT   & 4950.6 \\
  & 20    & 61    & \textbf{21325.2} & OOT   & OOT   & OOT   & OOT   & \textbf{3601.1} & OOT   & OOT   & OOT   & OOT \\

    \bottomrule
    \end{tabular}%
    }
  \label{tab:bound-cmp}%
  \vspace{-2mm}
\end{table}%

\subsection{The Effect of the Upper Bound}
 We empirically evaluate the upper bounds by comparing DnBk (which uses packing-and-coloring bound with conflict pairs) with the following algorithms.
\begin{itemize}
    \item DnBk\textsubscript{P}, the DnBk that uses the packing bound, see Bounding Rule \ref{bounding_packing}. 
    \item DnBk\textsubscript{C}, the DnBk that uses the coloring bound, see Bounding Rule \ref{bounding_coloring}.
    \item DnBk\textsubscript{S}, the DnBk that uses the \textit{Sorting} Bound in \citep{chang2023efficient}.
    \item DnBk\textsubscript{Club}, the DnBk that uses the \textit{Club} bound in \citep{jin2024kd}.
\end{itemize}

Due to the space limit, we show the computation time and the nodes of each algorithm for $12$ benchmark graphs in Tab. \ref{tab:bound-cmp}. 
Here, the number of nodes refers to the number of branches in the search tree produced by the algorithm.
We can see that for all instances, DnBk has fewer nodes than DnBk\textsubscript{S}, which in turn has fewer nodes than DnBk\textsubscript{P} and DnBk\textsubscript{C}. 
This matches the dominance relations that we conclude in Fig. \ref{fig:tight-compare}.
For instances like rt-retweet-cra1w and tech-as-skitter with $k$=1,  all algorithms can solve them within 10 seconds, the other variant could be faster than DnBk. 
This shows that the time overhead of computing the bound only affects easy instances.
The number of nodes is fewer than that of DnBk\textsubscript{Club} in most cases, implying that our new bound is empirically better than Club. Also, DnBk is faster than this algorithm.

\section{Conclusion}
In the paper, we discussed both theoretical and practical aspects of solving the maximum $k$-defective clique problem. 
We studied a branching algorithm that has a better exponential time complexity than the existing ones when $k$ is constant.
We also investigated a new bound that hybrids existing two bounds with the idea of conflict vertices.
The tightness of the bounds are analyzed based on the dominance relationships.
We note that the conflict relationship of vertices, which was first formalized in this paper, exists in many graph search problems, such as the maximum clique problem, and thus can be potentially extended in the future.

\section*{Acknowledgements}
This work was supported by Natural Science Foundation of Sichuan Province of China under grants 2023NSFSC1415 and 2023NSFSC0059, and National Natural Science Foundation of China under grants 61972070 and 62372095. The second author is also funded by China Postdoctoral Science Foundation under grant 2022M722815.






\bibliography{M749}

\end{document}















\section*{Appendix}
\section{Structure of the Appendix}

\begin{itemize}
    \item In Section \ref{section-heuristic-algorithm}, we describe the heuristic algorithm used in BnBk.
    \item In Section \ref{section-lemma3-decompositioncorrect}, we prove the correctness of Lemma 2 and Lemma 3. Specifically, we prove that the decomposition algorithm does not miss the optimal solution.
    \item In Section \ref{section-bound-algorithm}, we add the missing proofs for the upper bounding rules and algorithms in  the paper. Specifically, in Section 3.1, we show the correctness of all conflict rules in the paper. In Section 3.2, we show the correctness of the pakcing, coloring with conflict bounding rules. In Section 3.3, we show that DPBound solve the OPT problem without conflict constraint.
    \item In Section \ref{section-proof-dominance}, we prove the dominance relation among the proposed bounds and existing bounds.
    \item In Section \ref{section-statistic-results}, we provide complete experimental results, including the comparison of different variants on the whole benchmark and detailed comparison among DnBk, kDC-2 and -RR3.
\end{itemize}

\section{Heuristic Algorithm for Finding Initial Solution in DnBk}
\label{section-heuristic-algorithm}
The Algorithm \ref{alg_heuristic} shows the heuristic algorithm for building the initial solution. Note that the time complexity of the heuristic algorithm is $O(|V|+|E|)$.

\begin{algorithm}[ht!]
    \DontPrintSemicolon
        \caption{The Heuristic Algorithm to Produce an Initial $k$-defective Clique}
        \label{alg_heuristic}
        \KwIn{A graph $G=(V,E)$ and a non-negative integer $k$.}
        \KwOut{A heuristic solution $Q_h$ of maximum $k$-defective clique.} 
        \Begin{
            Initialize $Q_{h}\gets \emptyset$ \tcp{$Q_{h}$ maintains a $k$-defective clique.}
            Let $S\gets V(G)$\\
            \While{$G[S]$ is not a $k$-defective clique}{
                $v\gets$ the vertex with the smallest degree in $G[S]$\\
                $S\gets S\setminus v$
            }
            $Q_h\gets G[S]$\\
        }
        return $Q_{h}$
\end{algorithm}

\section {Proof to Lemma 2 and Lemma 3}
\label{section-lemma3-decompositioncorrect}

\begin{lemma}
Given an instance $I=(G, P,R)$ where $P\neq \emptyset$, the branching algorithm finds the maximum $k$-defective clique of $I$ in time $O(|R|^{2k}\gamma_c^{m})$ where 
$m=|CN(P)|$ and $\gamma_c$ is the base of the exponential factor in the time complexity for maximum clique algorithm. 
\end{lemma}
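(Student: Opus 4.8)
The plan is to exploit the fact that a $k$-defective clique tolerates at most $k$ missing edges, so only a bounded number of vertices can be endpoints of non-edges, while every remaining vertex must be fully connected and hence participate in an ordinary clique inside the common neighborhood of $P$. Concretely, I would fix an arbitrary optimal solution $Q^\star$ with $P \subseteq Q^\star \subseteq P \cup R$, let $X \subseteq Q^\star \cap R$ be the set of vertices of $R$ that are endpoints of some non-edge in $G[Q^\star]$, and set $Y = Q^\star \setminus (P \cup X)$. Since $G[Q^\star]$ contains at most $k$ non-edges, each contributing at most two endpoints, we get $|X| \le 2k$. By construction every vertex of $Y$ is adjacent to all other vertices of $Q^\star$; in particular $Y \subseteq CN(P)$, the set $Y$ induces a clique, and each vertex of $Y$ is adjacent to every vertex of $X$.

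This structural observation drives the following enumeration, which is exactly what the branching algorithm realizes. For each candidate set $X \subseteq R$ with $|X| \le 2k$ such that $G[P \cup X]$ already has at most $k$ non-edges, compute a maximum clique $M$ in the graph induced by $R \cap CN(P \cup X)$, and return the largest $P \cup X \cup M$ over all choices of $X$. I would first argue soundness: because $M$ is a clique and every vertex of $M$ lies in $CN(P \cup X)$, the only non-edges of $G[P \cup X \cup M]$ are those inside $G[P \cup X]$, of which there are at most $k$, so each returned candidate is a valid $k$-defective clique. For completeness, taking $X$ to be the defective-endpoint set of $Q^\star$ described above makes $Y$ a feasible clique in $R \cap CN(P \cup X)$, whence the computed $M$ satisfies $|M| \ge |Y|$ and the corresponding candidate has size at least $|Q^\star|$; combined with soundness this forces equality, so the algorithm outputs an optimum.

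For the running time, the outer loop ranges over $\sum_{i=0}^{2k}\binom{|R|}{i} = O(|R|^{2k})$ choices of $X$. For each, the clique candidate set $R \cap CN(P \cup X)$ satisfies $R \cap CN(P \cup X) \subseteq CN(P)$ and therefore has at most $m = |CN(P)|$ vertices, so invoking the maximum-clique subroutine costs $O(\gamma_c^{m})$ time, where $\gamma_c$ is the base of its exponential factor; the auxiliary bookkeeping (verifying the non-edge count of $P \cup X$ and building the induced subgraph) is polynomial and is absorbed into this factor. Multiplying the two contributions yields the claimed $O(|R|^{2k}\gamma_c^{m})$ bound.

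The step I expect to be the main obstacle is pinning down the structural decomposition tightly enough that the clique subproblem genuinely lives inside $CN(P)$ rather than inside all of $R$: the entire improvement rests on the fact that the non-defective part $Y$ must be adjacent to all of $P$, so that the exponential clique cost is paid in $m$ and not in $|R|$. I would therefore take care to verify that restricting attention to $R \cap CN(P \cup X)$ discards no optimal solution, and that the ``at most $2k$ endpoints'' accounting remains valid even when some non-edges have both endpoints in $P$ or straddle $P$ and $R$.
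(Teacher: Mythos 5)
Your time analysis is sound and captures the same two quantitative insights as the paper (at most $2k$ further defective vertices drawn from $R$, giving $O(|R|^{2k})$ terminal cases, each paying a maximum-clique call on at most $m=|CN(P)|$ vertices), but your completeness argument contains a genuine gap: you fix an optimum with $P\subseteq Q^\star\subseteq P\cup R$ and accordingly search for the clique part only inside $R\cap CN(P\cup X)$. Under the paper's instance semantics, only the \emph{defective} part is confined to $P\cup R$, i.e.\ $P\subseteq D(Q)\subseteq P\cup R$; the clique part $C(Q)$ may use vertices of $V\setminus(P\cup R)$, which are precisely the vertices that must be adjacent to everything else. This is visible elsewhere in the appendix: conflict rules 1 and 2 explicitly reason about vertices in $V\setminus(P\cup R)$ as solution candidates, and the OPT bounding rule lets the solution pieces $S_i$ range over a partition of all of $V\setminus P$, not of $R$. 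Concretely, take $P=\{p\}$, $R=\{u\}$, a third vertex $c\notin P\cup R$, edges $\{p,c\}$ and $\{u,c\}$ only, and $k=1$: the optimum $\{p,u,c\}$ has its single missing edge $\{p,u\}$ inside $P\cup R$ and is feasible, but your enumeration never generates it and returns $2$ instead of $3$. The repair is cheap---compute the maximum clique in $CN(P\cup X)\setminus(P\cup X)$ over the whole current graph, which still has at most $m$ vertices since $CN(P\cup X)\subseteq CN(P)$, so the $O(|R|^{2k}\gamma_c^{m})$ bound survives---but as written the step fails, and it is exactly the step you flagged as unverified.

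A second, smaller issue is that the lemma is a statement about the paper's branching algorithm, and your proof analyzes a surrogate flat enumeration, asserting without argument that it ``is exactly what the branching algorithm realizes.'' The paper's own proof never leaves the algorithm: it classifies the recursion-tree nodes into leaves and intermediate nodes, bounds the number of leaves through the recurrence $L(2k{-}|P|,|R|)\ge L(2k{-}|P|{-}1,|R|{-}1)+L(2k{-}|P|,|R|{-}1)$ with base cases $L(\cdot,0)=L(0,\cdot)=1$ and the closed form $(|R|+1)^{2k-|P|}$, then charges $O(|R|^{O(1)}\gamma_c^{m})$ per leaf and polynomial work per intermediate node. To prove the stated lemma you must tie your $\sum_{i\le 2k}\binom{|R|}{i}$ count to the actual recursion tree (the recurrence does this job), or show the tree's leaves correspond to your sets $X$. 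Once that identification and the clique-scope fix above are in place, your explicit soundness/completeness argument is actually a useful complement, since the paper's proof addresses only the running time and leaves correctness implicit in the branching rule.
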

\begin{proof}
    It is known that the branching algorithm produces a depth-first tree.  
    Also, the running time of the algorithm is equal to the time to generate all the tree nodes.
    Regarding our branching algorithm, the tree nodes are partitioned into \textit{intermidate nodes} and \textit{leaf nodes}.  
\begin{itemize}
    \item For leaf nodes, we noticed that in these nodes, the set $P$ in the input instance $I=(G, P, R)$ satisfies $|P|\le 2k$. 
    Now, let $L(2k-|P|,|R|)$ be the upper bound of the number of leaf nodes.
    As when $|R|=0$ or $|P|=2k$, our algorithm invokes maximum clique computation, therefore:
    \[
    L(\cdot,0) = 1 \text{ and } L(0,\cdot) =1 
    \]
    By our binary branching rule, the linear recurrence for $L(2k-|P|,|R|)$ holds.
    \[
    L(2k-|P|,|R|) \geq  L(2k-|P|-1,|R|-1) + L(2k-|P|,|R|-1)
    \]
    So, any closed function of $L$ satisfying the above conditions is a valid upper bound for the number of leaf nodes. We claim that $L(2k-|P|,|R|) = (|R|+1)^{2k-|P|}$ satisfies the above conditions because
    \[
    \begin{aligned}
    & L(2k-|P|,|R|) \\
    & = (|R|+1)^{2k-|P|}  \\
    & = (|R|+1)(|R|+1)^{2k-|P|-1} \\
    & = |R|(|R|+1)^{2k-|P|-1} + (|R|+1)^{2k-|P|-1} \\
    & \geq |R|^{2k-|P|} + |R|^{2k-|P|-1} \\
    & = {(|R|-1+1)}^{2k-|P|} + {(|R|-1+1)}^{2k-|P|-1} \\
    & = L(2k-|P|,|R|-1) + L(2k-|P|-1,|R|-1)
    \end{aligned}
    \]
    Therefore, for problem instacne $I=(G, P, R)$, the number of leaf nodes is bounded by $ L(2k-|P|,|R|) = (|R|+1)^{2k-|P|} = O(|R|^{2k-|P|})$ asymptotically.

    Considering that the running time for each leaf node is $O(|R|^{O(1)}\gamma_c^{m})$ (which is the time of maximum clique computation), the whole running time for all leaf nodes is $O(|R|^{2k+O(1)}\gamma_c^{m})$ as $2k-|P| \leq 2k$.
    \item For intermediate nodes, we observed that our branching rule can produce at most two child nodes, and meanwhile, the running time for each node is polynomial.
    So the time of generating all intermediate nodes is also bounded by the $O(|R|^{2k-|P|})$, that is, the time of generating all child nodes.  So the running time is $O(|R|^{2k+O(1)})$.
\end{itemize}
    To sum up, the running time for $I=(G,P,R)$ is bounded by $O(|R|^{2k+O(1)}\gamma_c^{m})$ when $k$ is fixed.
    
\end{proof}

\begin{lemma}
Give an instance $I=(G,\emptyset,V)$ and an order of $V$ whic is denoted by $v_1,...,v_n$. 
Then $\omega_k(I)=\mathop{\max}_{i=[n]}\left(\omega_k(I'_i),\omega_k(I''_i)\right)$ where $I'_i=(G[\{v_i\} \cup N^+(v_i)\cup N^{2+}_{G}(v_i)],\{v_i\}, N^+(v_i)\cup N^{2+}_{G}(v_i))$ and $I''_i=(G[\{v_i\}\cup N^+(v_i)],\emptyset,N^+(v_i))$.   
\end{lemma}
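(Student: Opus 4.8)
The plan is to prove the two inequalities separately. For the direction $\max_{i\in[n]}\bigl(\omega_k(I'_i),\omega_k(I''_i)\bigr)\le\omega_k(I)$, I would argue by monotonicity: every feasible solution of $I'_i$ or of $I''_i$ is a $k$-defective clique living in an induced subgraph of $G$ whose committed part $P$ is a subset of $V$, so it is also a feasible solution of $I=(G,\emptyset,V)$. Hence $\omega_k(I'_i)\le\omega_k(I)$ and $\omega_k(I''_i)\le\omega_k(I)$ for every $i$, and taking the maximum over $i$ preserves the bound. This half needs no structural insight; the whole content of the lemma is that the decomposition never discards the optimum, i.e. the reverse inequality.

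For $\omega_k(I)\le\max_{i\in[n]}\bigl(\omega_k(I'_i),\omega_k(I''_i)\bigr)$ I would fix a maximum $k$-defective clique $Q^\ast$ and let $v_i$ be its vertex of smallest index in the order $v_1,\dots,v_n$. Then every other vertex of $Q^\ast$ lies in the forward part of the order, so each $w\in Q^\ast\setminus\{v_i\}$ is either adjacent to $v_i$, in which case $w\in N^+(v_i)$, or is a forward non-neighbour of $v_i$, in which case I must place $w$ in $N^{2+}_G(v_i)$, i.e. show $w$ is within distance two of $v_i$. If all forward non-neighbours of $v_i$ inside $Q^\ast$ indeed lie in $N^{2+}_G(v_i)$, then $Q^\ast\subseteq\{v_i\}\cup N^+(v_i)\cup N^{2+}_G(v_i)$, and since $v_i\in Q^\ast$ the clique $Q^\ast$ is feasible for $I'_i$, giving $\omega_k(I'_i)\ge|Q^\ast|=\omega_k(I)$. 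The enabling fact is the distance-two property of large $k$-defective cliques: if two vertices $u,w\in Q^\ast$ are non-adjacent and share no common neighbour in $G$, then each of the remaining $|Q^\ast|-2$ vertices of $Q^\ast$ must miss an edge to $u$ or to $w$, which together with the missing edge $uw$ forces at least $|Q^\ast|-1$ missing edges, so $|Q^\ast|\le k+1$. Contrapositively, once $Q^\ast$ is large every pair of its vertices is within distance two, and a common neighbour of $v_i$ and $w$ inside $Q^\ast$ is automatically a forward neighbour of $v_i$ (because $v_i$ has the smallest index in $Q^\ast$), placing $w$ in $N^{2+}_G(v_i)$ exactly as needed.

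The part I expect to be the main obstacle is the complementary, degenerate regime where $Q^\ast$ is so small that the bound above is vacuous and $Q^\ast$ may contain a vertex at distance three from its minimum-index vertex; there the containment inside the two-hop set of $v_i$ can fail, so $I'_i$ alone does not certify the optimum. This is precisely where the second family $I''_i$, which searches only the forward neighbourhood $N^+(v_i)$, must take over, and I would argue that such a small optimum is recovered by $I''_j$ for a suitable $j$. Making the case split genuinely exhaustive—matching the two families $I'_i$ and $I''_i$ to the ``large'' and the degenerate regimes and checking that no optimum slips through at the boundary where the two size thresholds meet—is the delicate step, and it is here that the proof needs the most care.
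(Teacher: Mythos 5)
Your easy direction is fine and your two-hop counting is the right tool, but the hard direction has a genuine gap, and it is not merely a matter of care at the boundary: the case split is on the wrong variable. The instances $(G,P,R)$ in this paper carry a constraint beyond vertex containment: a feasible $Q$ must satisfy $P \subseteq D(Q) \subseteq P \cup R$, where $D(Q)$ denotes the vertices of $Q$ having at least one non-neighbour inside $Q$. Hence $I'_i$ requires $v_i$ to be a \emph{defective} vertex of the solution, while $I''_i$ (whose candidate set $N^+(v_i)$ does not contain $v_i$) requires $v_i$ to be adjacent to all other solution vertices. Consequently, your claim that a large $Q^*$ contained in $\{v_i\}\cup N^+(v_i)\cup N^{2+}_G(v_i)$ is automatically feasible for $I'_i$ fails whenever $v_i$ happens to be adjacent to every other vertex of $Q^*$. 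The correct dichotomy---the one the paper uses---is not ``large versus small'' but $v_i \in D(Q^*)$ versus $v_i \in C(Q^*)$: in the first case the two-hop property (the paper's Lemma 1) places $Q^*$ inside the vertex set of $I'_i$, and the constraint $\{v_i\}\subseteq D(Q^*)$ holds by assumption; in the second case every vertex of $Q^*\setminus\{v_i\}$ is a forward neighbour of $v_i$, so $Q^*\subseteq\{v_i\}\cup N^+(v_i)$ and $D(Q^*)\subseteq N^+(v_i)$, i.e.\ $Q^*$ is feasible for $I''_i$, \emph{regardless of its size}. With this split both cases close immediately and no boundary case remains.

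Moreover, the step you defer---recovering a ``small'' optimum through some $I''_j$---cannot be carried out, so your plan cannot be completed as written. Take $G$ to be two disjoint edges $ab$ and $cd$, $k=2$, with order $a,b,c,d$: then $\{a,b,c\}$ is a maximum $2$-defective clique of size $3$ (exactly two missing edges), yet every $I'_i$ and $I''_i$ is an instance on at most two vertices ($N^{2+}_G(\cdot)$ is empty here), so $\max_i\bigl(\omega_k(I'_i),\omega_k(I''_i)\bigr)=2<3$. The lemma is simply not true without the standing assumption under which the paper's two-hop lemma is stated (solutions of size at least $k+2$; the appendix's experimental convention of assigning $opt=k+1$ when the optimum is below $k+2$ reflects exactly this). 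The paper's proof absorbs the small regime into that hypothesis, not into the family $I''_i$, whose role, as explained above, is to capture optima whose first vertex is non-defective.
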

\begin{proof}
    For any maximum $k$-defective clique $Q^*$ in $G$, suppose that $v_i\in Q^*$ is the first vertex in $Q^*$ with respect to the given order.    
    We consider two cases that $v_i$ is either in $D(Q^*)$ or in $C(Q^*)$.
    \begin{itemize}
        \item If $v_i\in D(Q^*)$, then by Lemma 1(Two-hop Property) in the paper, any other vertex in $Q^*$ is a subset of $N^+(v_i)\cup N^{2+}_{G}(v_i)$. Then $\{v_i\}\subseteq D(Q^*)\subseteq \{v_i\}\cup N^+(v_i)\cup N^{2+}(v_i)$. So, $Q^*$ is the maximum $k$-defective clique to instance $I_i'$.
        \item If $v_i\in C(Q^*)$, then by the definition of $k$-defective set, any vertex in $D(Q^*)$ must be adjacent to $v_i$. That is to say, $\emptyset \subseteq D(Q^*) \subseteq N^+(v_i)$.
        So, $Q^*$ is the maximum $k$-defective clique to instance $I''_i$.
    \end{itemize}
\end{proof}

\section{Missing Proof for the Bound and Algorithms}
\label{section-bound-algorithm}

\subsection{Correctness of the Five Rules for Identifying Conflict Vertices}
We show that the following rules are correct for identifying conflict pairs.
\begin{enumerate}
    \item If $u\in R$, $v\in V\setminus (P\cup R)$ and $\{u,v\}\notin E$, then $\{u,v\}$ are   conflict vertices.
    \item If $u,v\in V\setminus (P\cup R)$ and $\{u,v\}\notin E$, then $\{u,v\}$ are conflict vertices.    
    \item If $u,v\in V\setminus P$ and $r(P\cup \{u,v\})<0$, then $\{u,v\}$ are conflict vertices.
    \item If $u,v\in V\setminus P$, $\{u,v\}\in E$ and $|N_G(u)\cap N_G(v)\cap V\setminus P|\le lb-(|P|+r(P)-w(u)-w(v)+2)$, then $\{u,v\}$ are conflict vertices.
    \item If $u,v\in V\setminus P$, $\{u,v\}\notin E$ and $|N_G(u)\cap N_G(v)\cap V\setminus P|\le lb-(|P|+r(P)-w(u)-w(v)+1)$, then $\{u,v\}$ are conflict vertices.
\end{enumerate}

\begin{proof}
The first two conditions hold because, in a $k$-defective clique, any vertex in the clique set should be adjacent to all other vertices.

The third condition is satisfied because $P\cup \{u,v\}$ is not a $k$-defective clique.  Due to the Hereditary Property, any $k$-defective clique subsumes $P$ cannot contain $u$ and $v$ at the same time.

Now, we show the correctness of the fourth and fifth rules. We denote $V\setminus (P\cup{\{u,v\}})$ as $T$ and $T$ can be partition into $T\cap N_G(u)\cap N_G(v)$ and $T\setminus{(N_G(u)\cap N_G(v))}$. For $P\cup \{u,v\}$, we can choose at most $r(P\cup \{u,v\})$ vertices from $T\setminus{(N_G(u)\cap N_G(v))}$ since every vertex in $T\setminus{(N_G(u)\cap N_G(v))}$ is not adjacent to at least one vertex in $\{u,v\}$. Therefore, the upper bound of $P\cup \{u,v\}$ is $|P\cup \{u,v\}|+|N_G(u)\cap N_G(v)\cap T|+r(P\cup \{u,v\})=|P|+2+|N_G(u)\cap N_G(v)\cap T|+r(P\cup \{u,v\})$. Therefore, if $\{u,v\}\notin E$, then $r(P\cup \{u,v\})=r(P)-w(u)-w(v)-1$; If $\{u,v\}\in E$, then $r(P\cup \{u,v\})=r(P)-w(u)-w(v)$. This further indicates that 
\begin{itemize}
    \item If $\{u,v\}\in E$ and $|P|+|N_G(u)\cap N_G(v)\cap T|+r(P)-w(u)-w(v)+2\leq lb$, then there cannot be a maximum $k$-defecitve clique include $P\cup \{u,v\}$ with size larger than $lb$, which means $\{u,v\}$ is a pair of conflict vertices. Then rule 4 is proved.
    \item If $\{u,v\}\notin E$ and $|P|+|N_G(u)\cap N_G(v)\cap T|+r(P)-w(u)-w(v)+1\leq lb$, then there cannot be a maximum $k$-defecitve clique include $P\cup \{u,v\}$ with size larger than $lb$, which means $\{u,v\}$ is a pair of conflict vertices. Then rule 5 is proved.
\end{itemize}
\textit{Note}: Rule 4 is also used in \cite{chang2023efficient} and \cite{gao2022exact} for edge deletion.
\end{proof}

\subsection{Proof of Bound Rule 3}
\label{subsection-bound}

\begin{bounding}[The Packing, Coloring with Conflict Bounding Rules]
\label{bounding_pack_color_exclusive}
    Given an instance $I=(G,P,R)$ and a lower bound $lb$, assume that $V\setminus P$ is partitioned into $\chi$ independent sets $\Pi_1,...,\Pi_\chi$. 
    For any $u,v\in V\setminus P$, $conflict(u,v)=1$ if $\{u,v\}$ are conflict vertices and $conflict(u,v)=0$ otherwise.    
     The optimal objective value of the optimization problem described in the following is an upper bound of $\omega_k(I)$.

\begin{align}
    \max_{S_i\subseteq \Pi_i,\forall i\in [\chi]}  & \quad |P|+\sum_{i=1}^{\chi}{|S_i|} & \text{ \textbf{OPT}} \nonumber\\
    \text{s.t.} &  \sum_{i=1}^{\chi}{\left( \binom{|S_i|}{2}+ \sum_{u\in S_i}w(u)\right)}\le r(P) \label{constraint_pack_color}\\
    & conflict(u,v)=0, \forall u,v \in S_i\ \forall i\in [\chi]   \label{constraint_conflict} 
\end{align}
\end{bounding}
\begin{proof}
We prove the optimal $k$-defective clique of instance $I$ satisfies all constraints and can provide a feasible solution of optimization problem \textbf{OPT}. 
Assume that $Q^*$ is an optimal $k$-defective clique and $|Q^*| > |P|$. Then $P\subseteq D(Q^*)\subseteq P\cup R$ and $Q^*\subseteq V\setminus P$. Let $S^*_i= Q^*\cap \Pi_i$ for any $i\in [\chi]$. Due to the definition of $k$-defective clique, we have 
\[
\sum_{i=1}^{\chi}{\binom{|S^{*}_i|}{2}}+ \sum_{i=1}^{\chi}{\sum_{u\in S^{*}_i}w(u)} + |E(\overline{G[Q]})| \le k
\]
where the first factor in the left-hand-side is the number of edges in each $\overline{G[S^*_i]}$, the second factor is the number of missing edges between each $S^*_i$ and $P$ and the third factor is the number of edges in $\overline{G[P]}$. 
Reorganizing the inequality, we have $\sum_{i=1}^{\chi}{\left( \binom{|S^*_i|}{2}+ \sum_{u\in S^*_i}w(u)\right)}\le r(P)$. This indicates that $S^*_i$ satisfies the constraint (2). By the definition of conflict vertices, every pair of vertices in $Q^*$ is not conflict, so for $\forall u,v\in S_i^*,\ \forall i\in [\chi]$ we have $conflict(u,v)=0$ then the constraint (3) is satisfied. Therefore, $|P|+\sum_{i=1}^{\chi}{|S^*_i|}=|Q^*|$ is a feasible solution to the \textbf{OPT}
So the optimal objective value of the Optimiation Problem \textbf{OPT} is an upper bound of $|Q^*|$.
\end{proof}
\subsection{Proof for Correctness of DPBound Algorithm}
\label{subsec_correct_packcolor}
We show that the DPBound algorithm solves the optimization problem OPT without the constraint (3), i.e., the conflict constraints.
\begin{proof}    
We show that in the first two steps, the algorithm actually computes the optimal value for 
$\max_{\forall i\in [\chi], S_i\subseteq \Pi_i}{|P|+\sum_{i=1}^{\chi}{|S_i|}}$ under the constraint (1) in the optimization problem.
The key point of the proof is that given an $i\le [\chi]$ and an $r\le r(P)$, the $t(i,r)$ that we computed in step 1, is the objective value of the following subproblem.
\begin{align}
    \max_{S_i\subseteq \Pi_i}  & \quad |S_i| \label{suboptimal_problem}\\
    \text{s.t.} &   \binom{|S_i|}{2}+ \sum_{u\in S_i}w(u) \le r \nonumber 
\end{align}
We show this fact by contraction.
Assume that there is another $S_i^{'}\subseteq \Pi_i$ such that $t'=\vert S'_i\vert> |S_i|=t(i,r)$ and $S_i^{'}$ satisfies $\sum_{u\in S_i^{'}}w(u)+\binom{|S_i^{'}|}{2}\leq r$. 
Assume that $S'_i$ is ordered as $u_i^1,...,u_i^{|S'_i|}$ by the order in step 1 of the DPBound algorithm (that is, the non-decreasing order of $w(\cdot)$). 
Now, it is safe to exchange $u_i^1$ with $v_i^1$ so that $S'_i$ still satisfies $\sum_{u\in S_i^{'}}w(u)+\binom{|S'_i|}{2}\leq r$. (Note that $v_i^1$ is the first vertex in $S_i$ by the DPBound algorithm.)
We continue this exchange procedure until all the first $|S_i|$ vertices in $S'_i$ are replaced by $v_i^1,...,v_i^{|S_i|}$. Because $|S'_i|>|S_i|$, there exist at least one vertex $u_i^{{|S_i|+1}}$ that $\binom{|S_i|+1}{2}+(\sum_{u\in S_i}w(u)+w(u_i^{{|S_i|+1}}))\le r$. This contradicts the fact that $|S_i|$ is the maximum number $j$ such that $\binom{j}{2}+\sum_1^{j}w(v_j^i)$ holds. Therefore, we claim that $t(i,r)$ is the optimal value for the optimization problem \ref{suboptimal_problem}.

Based on the fact that $t(i,r)$ solves the sub-optimization problem for a given $i$ and $r$. It is simple to justify the $f(i,r)$ is the solution to Optimization Problem \ref{suboptproblem_pack_color} given $i\in [\chi]$ and $r\in \{0,...,r(P)\}$.
\begin{align}
    \max_{\forall j\in [i], S_j\subseteq \Pi_j}  & \quad \sum_{j=1}^{i}{|S_j|} \label{suboptproblem_pack_color}\\
    \text{s.t.} &  \sum_{j=1}^{i}{\left( \binom{|S_j|}{2}+ \sum_{u\in S_j}w(u)\right)}\le r \nonumber 
\end{align}

\begin{itemize}
    \item when $i=1$, it is clear that  $f(1, r)= t(1,r)$ for any $r\in \{0,...,r(P)\}$.
    \item When $i>1$,  $f(i,r)$ is the largest value over all $f(i-1,r')+t(i-1,r-r')$s for any $r'\in \{0,...,r\}$
\end{itemize}
Lastly, let $i=\chi$ and $r=r(P)$, we conclude that the DPBound algorithm solves the optimization problem in the packing-and-coloring bound.
\end{proof}


\section{Proofs to the Dominance Relations Between Different Bounds}
\label{section-proof-dominance}
We first reformulate the sorting ~\cite{chang2023efficient} and club \cite{jin2024kd} bounding rules so that they match the context of our description.

\begin{bounding}[Sorting bound \cite{chang2023efficient}]
Given an instance $I=(G,P,R)$, assume that $V\setminus P$ is partitioned into $\chi$ independent sets $\Pi_1, \Pi_2, ..., \Pi_{\chi}$. Then, the following algorithm computes the upper bound for $\omega_k(I)$.
\begin{enumerate}
    \item For each $i\in [\chi]$, sort the vertices in each $\Pi_i$ in non-decreasing order regarding $w(\cdot)$. Assume the ordering as $v_i^1, v_i^2,...,v_i^{|\Pi_i|}$.  Then, for $j\in [|\Pi_i|]$, assign vertex $v_i^j$ the weight $w_s(v_i^j)=w(v_i^j)+j-1$.
    \item Sort all the vertices in $R$ by non-decreasing order of its weight $w_s(\cdot)$. Find the maximum $i_s$ such that $\sum_{j=1}^{i_s}w_{s}(v_j)\leq r(P)$
    \item return $|P|+i_s$ as the upper bound
\end{enumerate}    
\end{bounding}

\begin{bounding}[Club bound \cite{jin2024kd}]
    Given an instance $I=(G,P,R)$, the following algorithm computes the upper bound for $\omega_k(I)$.
\begin{enumerate}
    \item Sort the vertices in $V\setminus P$ by $w(\cdot)$ in non-decreasing order.  
    put vertex $u$ into $B_i$ if $w(u)=i$. Therefore, We have $R=B_0, B_1, ..., B_{r(P)}$ and note that $B_i=\emptyset$ if every vertex $v\in R$ such that $w(v)\neq i$.
    \item For each $i$ in $0,...,r_{r(P)}$, greedily partition vertices of $B_i$ into $\chi_i$ independent set.
    \item Re-partition $B_i$ into $z_i=\lceil \frac{|B_i|}{\chi_i}\rceil$ groups such that $B_i=b_1\cup b_2\cup ...\cup b_{z_i}$. We assume $B_i=\{v_i^1,v_i^2,...,v_i^{|B_i|}\}$, then for every $j\in [|B_i|]$, $v_i^j$ is belong to group  $b_{\lfloor \frac{j-1}{\chi_i} \rfloor+1}$. Assign the weight of vertex $u$ in group $j$ is $w_c(u)=w(u)+j-1$ (Note that $j\in \{1,2,...,z_i\}$).
    \item Sort all the vertices in $R$ with $w_c$ in increasing order. Then find the maximum $i_c$ such that $\sum_{j=1}^{i_c}w_{c}(v_j)\leq r(P)$
    \item return $|P|+i_c$ as the upper bound of $\omega_k(I)$
\end{enumerate}
\end{bounding}

In the following, we slightly abuse the notation and use $A\ge B$ to represent that bounding rule A dominates bounding rule B.
Besides, we use Packing, Coloring, DPBound, and PackColorConf to denote the result obtained by the packing bound algorithm (see Bounding Rule 1 in the paper), coloring bound algorithm (see Bounding Rule 2 in the paper), DPBound algorithm (see Section 4.2.1 in the paper) and PackColorConf algorithms (see Section 4.2.2 in the paper), respectively. (Note that, the DPBound algorithm computes the optimal value of the optimization problem in Bounding Rule 3, but the PackColorConf algorithm computes only an upper bound of the optimization problem OPT.

\begin{theorem}
Given an instance $I=(G,P,R)$, assume that $V\setminus P$ is partitioned into $\chi$ independent sets $\Pi_1, \Pi_2, ..., \Pi_{\chi}$.
Then, the following dominance relations hold.
\begin{enumerate}
    \item  PackColorConf $\geq$ DPBound 
    \item DPBound $\ge $Sorting and  Sorting $\ge $ DPBound . 
    \item Sorting $\geq$ Packing
    \item DPBound $\geq$ Coloring
    \item Club $\geq$ Packing
\end{enumerate}    
\end{theorem}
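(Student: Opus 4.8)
The plan is to reduce every rule to one ``budget-packing'' template and then compare the templates, recalling that $A\ge B$ means the value returned by $A$ is always at most that returned by $B$. Each rule outputs $|P|$ plus the maximum number of candidate vertices that can be chosen from $V\setminus P$ so that the missing edges they generate stay within the budget $r(P)$; the rules differ only in how they estimate those missing edges. I would first prove a single structural lemma: inside one color class $\Pi_i$ ordered by $w(\cdot)$ as $v_i^1,\dots,v_i^{|\Pi_i|}$, an optimal selection of any fixed size is a prefix (the prefix minimizes $\sum_{u\in S_i}w(u)$, while $\binom{|S_i|}{2}$ depends only on $|S_i|$), and since $\binom{t}{2}=\sum_{j=1}^{t}(j-1)$ the marginal cost of adding $v_i^j$ equals $w_s(v_i^j)=w(v_i^j)+(j-1)$, a quantity that is non-decreasing in $j$.

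With this lemma the cluster of items 2, 3 and 4 becomes a set of relaxation/greedy comparisons over the common program ``maximize the number of chosen unit items with costs $w_s(v_i^j)$ subject to total cost $\le r(P)$''. For item 2, both DPBound and the Sorting rule solve exactly this program: DPBound optimizes over prefix-feasible selections, while Sorting performs the global cheapest-first greedy; because the $w_s$-costs are non-decreasing within each class, the cheapest-first set is automatically prefix-feasible, so the two optima coincide and both DPBound $\ge$ Sorting and Sorting $\ge$ DPBound hold. For item 4, I would identify Coloring with the optimum of the same program after deleting the weight term $\sum_{u\in S_i}w(u)$ from constraint \eqref{constraint_pack_color}; this enlarges the feasible region, so Coloring's optimum is at least DPBound's, giving DPBound $\ge$ Coloring. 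For items 3 and 5, Packing is the optimum of the program in which each candidate is charged only its base weight $w(\cdot)$ (the within-candidate non-edges are dropped), whereas Sorting charges $w_s=w+(\mathrm{rank}-1)$ and Club charges $w_c=w+(\mathrm{group}-1)$, both pointwise at least $w$; since a cardinality-maximizing greedy under a fixed budget can only select fewer items when costs grow, Sorting $\ge$ Packing and Club $\ge$ Packing follow. Identifying Packing and Coloring with these particular relaxations is where I would lean on the exact statements of Bounding Rules 1 and 2.

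The remaining item 1, PackColorConf $\ge$ DPBound, is the delicate one, because it is not a consequence of ``adding constraints lowers the optimum''. The conflict constraint \eqref{constraint_conflict} does shrink the feasible set, so the true optimum of \textbf{OPT} is at most DPBound; but PackColorConf only computes an \emph{upper bound} of \textbf{OPT}, so a priori it might exceed DPBound. I would therefore argue at the algorithmic level: PackColorConf executes the same dynamic program as DPBound but, in each class, replaces the admissible selection $S_i$ by a conflict-free subset of it, so every table entry it produces is dominated, cell by cell, by the corresponding DPBound entry; an induction over the stages $i=1,\dots,\chi$ and budgets $r\in\{0,\dots,r(P)\}$ then yields PackColorConf $\le$ DPBound at the final cell.

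I expect this induction to be the main obstacle, since its correctness hinges on the precise way PackColorConf injects conflict handling into the recurrence (it must guarantee that the conflict-aware per-class choice is never larger than the unconstrained one DPBound uses, and that this domination is preserved when the per-class results are combined across budgets). The other four items reduce to the routine prefix-exchange and relaxation arguments sketched above once the structural lemma is in place.
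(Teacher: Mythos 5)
Your proposal is correct and follows essentially the same route as the paper: item 1 is argued exactly as the paper does, at the algorithmic level via cell-by-cell domination of the DP tables ($t(i,r)$ with conflicts never exceeds $t(i,r)$ without, and the $f(i,r)$ recurrence is identical); item 2 rests on the same two ingredients the paper uses, namely the prefix-exchange optimality within each $\Pi_i$ and the identity $\binom{t}{2}=\sum_{j=1}^{t}(j-1)$ turning per-class costs into the $w_s$ weights; and items 3 and 5 are the paper's pointwise comparisons $w\le w_s$ and $w\le w_c$ combined with the cheapest-first greedy. The one imprecision is item 4: the paper's Coloring bound equals $|P|+\sum_{i=1}^{\chi}\min\{\lfloor(1+\sqrt{1+8k})/2\rfloor,|\Pi_i|\}$, which is \emph{not} the optimum of your weight-deleted program but a strictly weaker relaxation of it (per-class budget $k\ge r(P)$ with the classes decoupled), so your relaxation chain still yields DPBound $\geq$ Coloring after one extra line — the paper instead argues directly that any optimal DPBound solution satisfies $\binom{|S_i|}{2}\le r(P)\le k$ and $|S_i|\le|\Pi_i|$, hence is dominated term by term by the Coloring value.
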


\begin{proof}    
    \noindent\textbf{(1) PackColorConf$\geq$ DPBound } \\
    This relationship is quite straightforward. By the PackColorConf bounding rules, we consider conflict pairs, and hence $t(i,r)$ computed via this rule is smaller than or equal to $t(i,r)$ computed by the DPBound algorithm. (This has been demonstrated in the main body of the paper.) 
    On the other hand, the linear recurrence relations of computing $f(i,r)$ are the same in the two algorithms. Therefore, PackColorConf $\geq$ DPBound .\\
    
    \noindent\textbf{(2) Assume that the partition of $V\setminus P$ is the same for the DPBound and Sorting algorithm, the DPBound $\geq$ Sorting and  Sorting$\geq$DPBound .} \\
    Assume that the partition of $R$ is $\Pi_1,...,\Pi_\chi$. 
    Denote the upper bound obtained by DPBound is $ub_1$ and the upper bound obtained by Sorting is $ub_2$. 
    
    \textit{(a) We first show that $ub_{2}\geq ub_{1}$, i.e., DPBound $\ge$ Sorting} \\
    As shown in the proof of the correctness of the DPBound algorithm (Section \ref{subsec_correct_packcolor}),  $t(i,r)$ is the maximum number of vertices that we can take from the set $\Pi_i$ into $P$ such that the number of missing edges is equal to or smaller than $r$.
    
    Assume that a solution to the Optimization Problem OPT without constraint (2) is $S_1,S_2,...,S_{\chi}$.
    Let $r_1=\binom{|S_1|}{2}+\sum_{v\in S_1}w(v), r_2=\binom{|S_2|}{2}+\sum_{v\in S_2}w(v),...,r_\chi=\binom{|S_{\chi}|}{2}+\sum_{v\in S_{\chi}}w(v)$. 
    By our proof of the correctness of the DPBound algorithm (Section \ref{subsec_correct_packcolor}), we claim that  $f(\chi,r(P))$ obtained by DPBound in the last step is equal to $\sum_{i=1}^{\chi} t(i,r_i)$.
    Let us further denote $p_i=t(i,r_i)$. Then the following inequality holds for any $i\in [\chi]$
    \[
    \sum_{j=1}^{p_i}w(v_j^i)+\binom{p_i}{2}=\sum_{j=1}^{p_i}(w(v_j^i)+j-1)=\sum_{j=1}^{p_i}w_s(v_j^i)= r_i
    \]
    The first equality holds due to the computation steps of $t(i,r_i)$ and the second equality holds due to the definition of $w_s(v_j^i)$.
    Now, considering $S_1,S_2,...,S_{\chi}$ is a feasible solution to the Optimization Problem in the packing-and-coloring bound, we are ready to obtain the following inequalities. 
    \[
    \sum_{i=1}^{f(\chi,r(P))}w_s(v_i)=\sum_{i=1}^{\chi}\sum_{j=1}^{p_i}w_s(v_j^i)= \sum_{i=1}^{\chi} r_i\leq r(P)
    \]
    By the first two steps in the Sorting bound, we finally obtained $ub_2-|P|\ge f(\chi, r(P))$, that is, $ub_1\leq ub_2$

   \textit{(b) Then, we show that $ub_{1}\geq ub_{2}$, i.e., Sorting $\geq$ DPBound } \\
    By the steps of computing the sorting bound, $\sum_{j=1}^{i_s}w_{s}(v_j)\leq r(P)$.
    Let us denote sequence $R_s=\{v_1,v_2,...,v_{i_s}\}$ are the first $i_s$ vertices in $V\setminus P$, sorting by the non-decreasing order of $w_s(\cdot)$. Note that we slightly abuse the notation by using $R_s$ to also denote the set of $R_s$ when the context is clear.
    For each $i\in [\chi]$, let us assume that sequence $T_i$ is a subsequence of $R_s$ and  $T_i = R_s \cap \Pi_i=\{ v_1^i,v_2^i,...,v_{|T_i|}^i\}$.
    Now, we get the following equality due to the definitions of $w_s(\cdot)$.

    \[
        \begin{aligned}
            &\sum_{j=1}^{i_s}w_{s}(v_j)\\
            &=\sum_{i=1}^{\chi}\sum_{j=1}^{|T_i|}w_s(v_j^i)\\
            &=\sum_{i=1}^{\chi}\sum_{j=1}^{|T_i|}(w(v_j^i)+j-1)\\
            &=\sum_{i=1}^{\chi}\left(\binom{|T_i|}{2}+\sum_{j=1}^{|T_i|}w(v_j^i)\right)
        \end{aligned}
    \]

    Recall the fact that $\sum_{j=1}^{i_s}w_{s}(v_j)\leq r(P)$, we have
    \[
    \sum_{i=1}^{\chi}\left(\binom{|T_i|}{2}+\sum_{j=1}^{|T_i|}w(v_j^i)\right)\leq r(P)
    \]
    Therefore, the partition $T_1,...,T_\chi$ is a feasible solution to the Optimization Problem in the packing-and-coloring bound.
    On the other hand, we already demonstrate that $ub_1$ is the optimal solution to the packing-and-coloring bound. 
    Thus, we have $ub_1\geq ub_2$.
    
    \noindent\textbf{(3) Sorting$\geq$ Packing} \\
    Recall the steps of computing the packing bound: Sort the vertex in $R$ by non-decreasing order of the weight $w(\cdot)$ and find the largest index $i_p$ such that $\sum_{j=1}^{i_p}w(v_j)\leq r(P)$.
    The sorting bound has a similar step, but the definition of weight is different from the packing bound.
    In sorting bound, given a vertex $u\in R$, suppose $u\in \Pi_i$ and $u$ ranks at the $j$th position by non-decreasing order of $w(\cdot)$ in $\Pi_i$, then the weight of $u$ defined as $w_{s}(u)=w(u)+j-1$.
    Clearly,  for each $u\in R$, we have $w(u)\leq w_{s}(u)$. 
    Now suppose that $R$ is ordered by $w_s(\cdot)$.  Then $\sum_{j=1}^{i_s}w(u_j)\leq r(P)$  holds. 
    Therefore, $i_p$ is larger than or equal to $i_s$. 
    Hence, Sorting$\geq$ Packing.

    \noindent\textbf{(4)DPBound $\geq$ Coloring}\\
    Suppose that $S_1,...,S_\chi$ is an optimal solution to  Optimization Problem (1) in packing-and-coloring bound. 
    Then $S_i\subseteq \Pi_i$, we have $|S_i|\le |\Pi_i|$. 
    On the other hand,$\binom{|S_i|}{2}\le k$ because
    \[
     \begin{aligned}
         &\binom{|S_i|}{2}\\
     &\le \binom{|S_i|}{2} + \sum_{u\in S_i}w(u) \\
     &\leq \sum_{i=1}^{\chi}\left(\binom{|S_i|}{2} + \sum_{u\in S_i}w(u)\right)\leq r(P) \le k.
     \end{aligned}
    \]
    In sum, we have $|S_i|\leq \min \{\lfloor\frac{1+\sqrt{1+8k}}{2}\rfloor,|\Pi_i|\}$. 
    This indicates $|P|+\sum_{i=1}^{\chi}|S_i|\le |P|+\sum_{i=1}^{\chi}{\min \{\lfloor\frac{1+\sqrt{1+8k}}{2}\rfloor,|\Pi_i|\}}$, that is, DPBound $\geq$ Coloring. \\
    \noindent\textbf{(5)Club$\geq$ Packing}\\
    If suffices to justify that for any vertex $u\in R$, the weight defined by club bounding, which is denoted as $w_c(u)$, is larger than or equal to $w(u)=w_{p}(u)$. Then the club rule dominates the packing bounding rule ( just like our proof to (3) Sorting $\geq$ Packing ).
   
\end{proof}

\section{More Statistical Results of Experiments}
\label{section-statistic-results}
\subsection{Complete Experiments for Bounds}

\begin{figure*}[ht]
\vspace{-2mm}
\includegraphics[width=\linewidth]{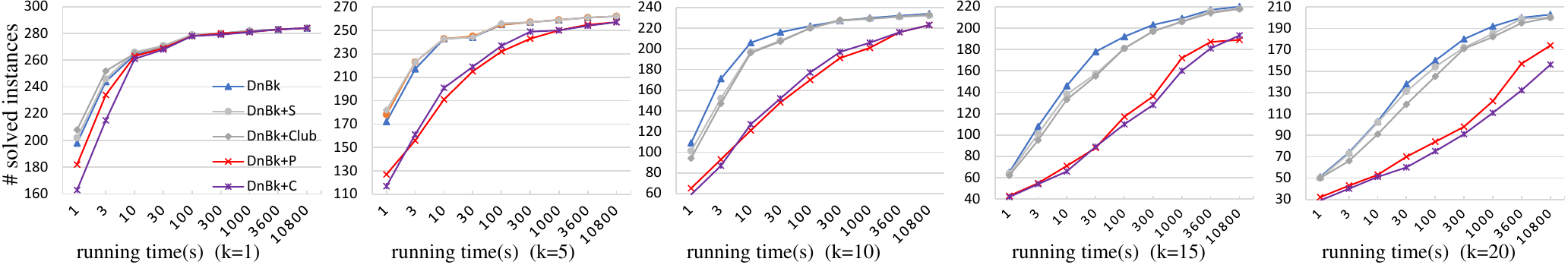}
\caption{
Number of Solved Instances for 3 Datasets for DnBk, DnBk\textsubscript{S}, DnBk\textsubscript{Club}, DnBk\textsubscript{P}, DnBk\textsubscript{C} when $k=1,5,10,15,20$.}
\label{fig:boundCmp}
\end{figure*}
Detailed experimental results are given in file data.pdf. Note that for certain $k$, if the maximum $k$-defective clique is smaller than $k+2$, then the \textit{opt} is assigned to $k+1$. 
Here we will give the whole performance of the five variants.
We first show the results of the five variants in Fig. \ref{fig:boundCmp} on all graphs with $k=1,5,15,20$. From Fig. \ref{fig:boundCmp}, we can see that for all $k$ values DnBk, DnBk\textsubscript{S} and DnBk\textsubscript{Club} are faster than DnBk\textsubscript{P} and DnBk\textsubscript{C}. Moreover, the dominance is become more obvious when $k$ is increasing. When $k$ is relatively small, i.e., $k=1,5$, the efficiency of DnBk, DnBk\textsubscript{S} and DnBk\textsubscript{Club} are almost the same. But when $k$ is large, i.e., $k=15,20$, DnBk outperforms DnBk\textsubscript{S} and DnBk\textsubscript{Club} and when $k=20$ DnBk\textsubscript{S} obviously outperforms DnBk\textsubscript{Club}. For DnBk\textsubscript{P} and DnBk\textsubscript{C}, they cannot dominant each other but DnBk\textsubscript{P} is slightly better than DnBk\textsubscript{C}.

\subsection{Comparing to kDC-2 and -RR3}
Since the DnBk performs very similar to -RR3 and kDC-2 in \cite{chang2024maximum}, we show the experiment results of these three solvers on the 39 Socfb large social network graphs. The three solvers are listed below.
\begin{itemize}
    \item DnBk, the most efficient solver in our paper.
    \item kDC-2, the most efficient solver in \cite{chang2024maximum}
    \item -RR3, the variant of kDC-2 without \textit{Reduction Rule 3} in ~\cite{chang2024maximum}.
\end{itemize}

Actually, the research [9] is concurrent with ours, and the source code is still unavailable.
Hence, we can only compare DnBk with kDC-two and -RR3 by analyzing the Table3 in [9], i.e. running time on 39 socfb-graphs with k=10 and 15, the only detailed results reported.

First, we use kDC as the benchmark algorithm for machine performance since both studies used the same kDC source code. We observed that the machine in [9] is likely to be 2.21x faster than ours.
Thus, we scaled the running times in [9] by 2.21 times and compare them with DnBk. 

The processed running times of all instances solved by kDC-2 and -RR3 are listed in columns "KDC-2" and "-RR3"  of Table 16 in Section 2 of the data.pdf respectively. The data.pdf can be obtained from \url{https://github.com/cy-Luo000/Maximum-k-Defective-Clique.git}.


From Table 16, we can get the sum of running times of 39 Socfb graphs for each solver and different $k$ values respectively, which is shown in Table \ref{tab:total-running-time}.

\begin{table}[htbp]
\caption{The total running time for DnBk, kDC-2 and -RR3 when $k=10$ and $k=15$. The time unit is second. The best performers are highlighted in bold.}
  \centering
  
    \begin{tabular}{c|ccc}
     \toprule
          & DnBk  & kDC-2 & -RR3 \\
    \hline
    k=10  & 380.92  & \textbf{309.45 } & 556.93  \\
    k=15  & \textbf{3206.40 } & 5966.90  & 13728.54  \\
    \bottomrule
    \end{tabular}%
  \label{tab:total-running-time}%
\end{table}%

Generally, when k = 10, we have kDC-two$>$DnBk$>$RR3 (here “$>$” means “faster than”). But when k=15, DnBk $>$ kDC-two$ >$RR3.

Specifically, for instance which can be solved by all algorithms within 10 seconds, kDC-two is generally faster than DnBk. For instances that cannot be solved by kDC-two or RR3 in 100 seconds, DnBk is faster than them in most cases. For reference, we list the detailed running times of the second group of instances, given k=15, which are listed in Table \ref{tab:hard-exp}.

\begin{table}[H]
\caption{The running times of all the instances that kDC-2 and -RR3 are unable to solve in 100 seconds when $k=15$. The time unit is second and the best performers are highlighted in bold.}
\vspace{\topsep}
  \centering
    \begin{tabular}{cccc}
    \toprule
    Graph & DnBk  & kDC-2 & -RR3 \\
    \midrule
    socfb-A-anon & 685.71  & \textbf{528.62 } & 1386.79  \\
    socfb-Auburn71 & \textbf{51.57 } & 117.22  & 327.34  \\
    socfb-B-anon & \textbf{1785.76 } & 3200.45  & 6148.75  \\
    socfb-Oklahoma97 & \textbf{31.70 } & 110.59  & 168.10  \\
    socfb-Texas84 & \textbf{36.95 } & 188.00  & 805.09  \\
    socfb-UGA50 & \textbf{382.97 } & 1484.10  & 4144.88  \\
    \bottomrule
    \end{tabular}%
  \label{tab:hard-exp}%
\end{table}



\bibliography{Supplementary}